\tikzstyle{arrow} = [thick,->,>=stealth]
\newcommand{\cC}{\mathcal{C}}
\newcommand{\cD}{\mathcal{D}}
\newcommand{\cS}{\mathcal{S}}
\newcommand{\cT}{\mathcal{T}}
\newcommand{\cX}{\mathcal{X}}
\newcommand{\cY}{\mathcal{Y}}
\newcommand{\bE}{\mathbb{E}}
\theoremstyle{plain}
\newtheorem{theorem}{Theorem}
\newtheorem{lemma}[theorem]{Lemma}
\theoremstyle{definition}
\theoremstyle{remark}
\newtheorem{remark}{Remark}
\begin{document}
	\title{Semantic Compression with Side Information: A Rate-Distortion Perspective}
	
	\author{Tao Guo, {\it Member, IEEE}, Yizhu Wang, Jie Han, Huihui Wu, {\it Member, IEEE}, \\
		Bo Bai, {\it Senior Member, IEEE}, and Wei Han, {\it Member, IEEE}}
	\date{\today}
	
	\maketitle
	\balance
	
	\begin{abstract}
We consider the semantic rate-distortion problem motivated by task-oriented video compression. The semantic information corresponding to the task, which is not observable to the encoder, shows impacts on the observations through a joint probability distribution. The similarities among intra-frame segments and inter-frames in video compression are formulated as side information available at both the encoder and the decoder. The decoder is interested in recovering the observation and making an inference of the semantic information under certain distortion constraints.

We establish the information-theoretic limits for the tradeoff between compression rates and distortions by fully characterizing the rate-distortion function. We further evaluate the rate-distortion function under specific Markov conditions for three scenarios: i) both the task and the observation are binary sources; ii) the task is a binary classification of an integer observation as even and odd; iii) Gaussian correlated task and observation. We also illustrate through numerical results that recovering only the semantic information can reduce the coding rate comparing to recovering the source observation.
	\end{abstract}
	
	\begin{IEEEkeywords}
		Semantic communication, inference, video compression, rate-distortion function.
	\end{IEEEkeywords}

	\section{Introduction}
	The rate limit for lossless compression of memoryless sources is commonly known as the {\it entropy} shown by Shannon in his landmark paper~\cite{Shannon-1948}. In addition, 
	lossy source coding under  given fidelity criterion was also  introduced in the same paper. Further, the Shannon  {\it rate-distortion function} was  proposed in ~\cite{Shannon-RD-1959}, characterizing the optimal tradeoff between compression rates and distortion measurements, from the perspective of mutual information.  
	
	Thereafter, the rate-distortion function was investigated when side information is available at the encoder or/and decoder, see  \cite{Gray-cRD-1972,Gray-cRD-1973,Wyner-Ziv-IT76,Wyner-RD-SI-78,Kaspi-94,Permuter-Weissman-RDCSI-11,Watanabe-product_source-13,Heegard-Berger-85,complementary_delivery-06} and reference therein. In the case that the side information is only available at the encoder, then no benefit could be achieved. In case of side information being only at the decoder, the corresponding rate-distortion function was considered by Wyner and Ziv in \cite{Wyner-Ziv-IT76}, with its extensions being discussed in \cite{Watanabe-product_source-13,Heegard-Berger-85,complementary_delivery-06,Wyner-RD-SI-78,Kaspi-94,Permuter-Weissman-RDCSI-11}.
	Finally, if both the encoder and  decoder have access to the same side information, the optimal tradeoff is called  {\it conditional rate-distortion function}, which was given by  \cite{Gray-cRD-1972} and \cite{Gray-cRD-1973}. 
	The lossy source coding theory finds applications in establishing information theoretic limits for practical  compression of speech signals, images and  videos etc.~ \cite{Tasto-Wintz-72,Aaron-Rane-Zhang-Girod-03,Gibson-Hu-book-14,Wang-etal-MDforVideo-05}. 
	Practical techniques for video compression have been explored since decades ago \cite{Chen-etal-91,Zeng-02,Habibian-etal-19}. 
	Currently, popular protocols such as HEVC, VP9, VVC and AV1 
	are based on partitioning a picture/frame into coding tree units, which typically correspond to 64x64 or 128x128 pixel areas.
	Each coding tree unit is then partitioned into coding blocks (segments) \cite{Habibi-Wintz-71,Hu-Gibson-09}, following a recursive coding tree representation. The  aforementioned compression schemes consider  both 
	intra-correlation within one frame and inter-correlation between two consecutive frames. 
	
	
	Nowadays, with the development of high-definition videos, 5G communication systems and industrial Internet of Things, communication overhead and storage demand have been  exponentially growing. As a result, higher compression  rates are required, but it seems not possible by simply  compressing a given source (e.g., a video or image) itself in light of the rate-distortion limits. 
	
	Semantic or Task-oriented compression \cite{Yener-etal-semantic-18,RD-Perception-19,Xie-Qin-21,Wen-Qin-21}, aiming at efficiently compressing the sources  according to specific communication tasks (e.g., video detection, inference, classification, decision making, etc.), has been viewed as a promising technique for future 6G systems due to its extraordinary effectiveness. 
	Particularly,  the goal of semantic compression is to recover the necessary semantic information  corresponding to a certain task instead of each individually transmitted bit as in Shannon communication setups, and thus it leads to significant reduction of coding rates.  
	
	In addition to the interested semantic information, the original sources are also required in some cases such as video surveillance, for the purpose of evidence storage and verification. An effective way is to save only the most important or relevant segments of a video,  and related work on semantic-based  image/video segmentation can be found in \cite{Wang-Wang-Chen-21,Guo-etal-SemanticSeg-17,Zhao-etal-SemanticSeg-17}.
	Most recently, the classical indirect source coding problem~\cite{semantic-indirect-80} was revisited from the semantic point of view, and  the corresponding rate-distortion framework for semantic information in \cite{Liu-Zhang-Poor-ISIT21,Kountouris-ISIT22,Liu-Zhang-Poor-arxiv22,Kountouris-arxiv22}. 
	%
	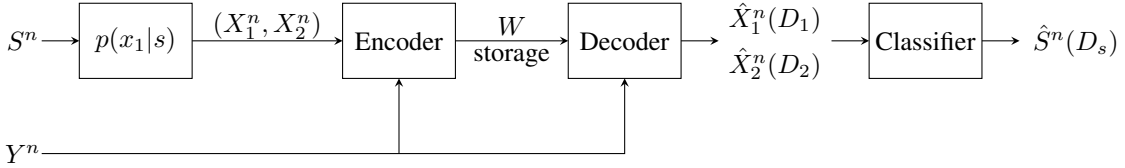
\begin{figure*}[h!]
		\centering
		\begin{tikzpicture}
		\node at (0.75,0) {$S^n$};
		\draw[->,>=stealth] (1,0)--(1.5,0);
		
		\node at (2.25,0) {$p(x_1|s)$};
		\draw (1.5,-0.5) rectangle (3,0.5);
		
		\node at (4.0,0.2) {$(X_1^n,X_2^n)$};
		\draw[->,>=stealth] (3,0)--(5,0);
		
		\node at (5.75,0) {Encoder};
		\draw (5,-0.5) rectangle (6.5,0.5);
		
		\draw[->,>=stealth] (6.5,0)--(8,0);
		\node at (7.25,0.2) {$W$};
		\node at (7.25,-0.2) {storage};
		
		\node at (8.75,0) {Decoder};
		\draw (8.0,-0.5) rectangle (9.5,0.5);
		
		\draw[->,>=stealth] (9.5,0)--(10,0);
		\node at (10.75,0.3) {$\hat{X}_1^n (D_1)$};
		\node at (10.75,-0.3) {$\hat{X}_2^n (D_2)$};
		
		\draw[->,>=stealth] (11.5,0)--(12,0);
		
		\node at (12.75,0) {Classifier};
		\draw (12,-0.5) rectangle (13.5,0.5);
		
		\draw[->,>=stealth] (13.5,0)--(14,0);
		\node at (14.75,0) {$\hat{S}^n (D_s)$};
		
		\node at (0.75,-1.5) {$Y^n$};
		\draw[->,>=stealth] (1.0,-1.5)--(8.75,-1.5)--(8.75,-0.5);
		\draw[->,>=stealth] (5.75,-1.5)--(5.75,-0.5);
		
		\end{tikzpicture}
		\caption{Illustration of system model with side information.}
		\label{fig_seg-inference-SI-model}
	\end{figure*}
	%

	The current paper introduces side information into the  framework of  \cite{Liu-Zhang-Poor-ISIT21} and completely characterizes the semantic rate-distortion problem with side information.  Motivated from the task-oriented video compression,  the semantic information corresponding to the task is not observable to the encoder.  In light of video segmentation,  the observed source is partitioned into two parts, and the semantic information only shows influence  on the more important part. 
	Moreover, the intra-correlation and inter-correlation  are viewed as side information, and they are available at both the encoder and decoder to help compression.  
	The decoder needs to reconstruct the whole source subject to different distortion constraints for the two parts, respectively. 
	The semantic information can be recovered upon observing the source reconstructions at the decoder. Finally,  
	our main contributions are summarized as follows: 
	\begin{enumerate}[1)]
		\item We fully characterize the optimal rate-distortion tradeoff. 
		It is further shown that separately compressing the two source parts is optimal, if the they are independent conditioning on the side information. 
		
		\item The rate-distortion function is evaluated for the inference of a binary source under some specific Markov chains. 
		
		\item We further evaluate the rate-distortion function for binary classification of an integer source. 
		The numerical results show that recovering only the semantic information can reduce the coding rate comparing to recovering the source message. 
		
		\item The rate-distortion function for Gaussian sources is also illustrated, which may provide more insights for future real video compression simulations.
	\end{enumerate}
	This paper is mainly pertained to the information-theoretic aspects, 
	and future work on the limit of real video compression is under investigation.
	
	The rest of the paper is organized as follows. 
	We first formulate the problem and present some preliminary results in \Cref{section:formulation}. 
	In \Cref{section:RD-function}, we characterize the rate-distortion function and some useful properties. 
	Evaluations of the rate-distortion function for binary, integer, and Gaussian sources with Hamming/mean squared error distortions are devoted to \Cref{section:example-binary,section:example-integer,section:example-Gaussian}, respectively. 
	We present and analyze some plots of the evaluations in \Cref{section:plots}. 
	The paper is concluded in \Cref{section:conclusion}. 
	Some essential proofs can be found in the appendices.

	\section{Problem Formulation and Preliminaries}\label{section:formulation}
	\subsection{Problem Formulation}\label{section:problemSetup}
	Consider the system model for video detection (inference) that also requires evidence storage depicted in Fig.~\ref{fig_seg-inference-SI-model}. 
	The problem is defined as follows. 
	A collection of {\it discrete memoryless sources} (DMS) is described by generic random variables $(S,X_1,X_2,Y)$ 
	taking values in finite alphabets $\cS\times\cX_1\times\cX_2\times\cY$ according to probability distribution $p(x_1,x_2,y)p(s|x_1)$. 
	In particular, this indicates the Markov chain $S-X_1-(X_2,Y)$. 
	We interpret $S$ as a latent variable, which is not observable by the encoder. 
	It can be viewed as the semantic information (e.g., the state of a system), 
	which describes the features of the system. 
	We assume that the observation of the system consists of two parts: 
	\begin{itemize}
		\item $X_1$ varies according to the semantic information~$S$, which captures the ``appearance" of the features, 
		e.g., the vehicle and red lights in the frame that captures a violation at the cross; 
		
		\item $X_2$ is the background information irrelevant to the features, 
		e.g., buildings in the frame capturing the violation. 
	\end{itemize}
	$Y$ is the side information that can help compressing such as previous frames in the video. 
	For length-$n$ source sequences, $(S^n,X_1^n,X_2^n,Y^n)$, 
	the encoder has access to only the observed ones $(X_1^n,X_2^n,Y^n)$ and encodes them as $W$ which will be stored at the server. 
	Upon observing local information $Y^n$ and receiving $W$, the decoder reconstructs the source sequences as $(\hat{X}_1^n,\hat{X}_2^n)$ drawn values from $\hat{\cX}_1\times \hat{\cX}_2$, within distortions $D_1$ and $D_2$. 
	Given the reconstructions, the classifier is required to recover the semantic information as~$\hat{S}^n$ from alphabet $\hat{\cS}$ with distortion constraint $D_s$. 
	Here, for simplicity, we assume a perfect classifier, i.e., it is equivalent to recover $\hat{S}^n$ directly at the decoder as illustrated in Fig.~\ref{fig_seg-inference-SI-model-eq}. 
	%
	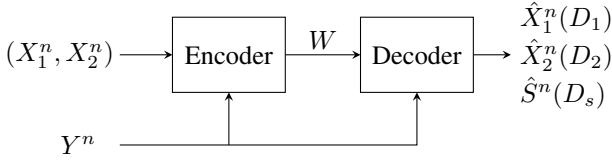
\begin{figure}[t!]
		\centering
		\begin{tikzpicture}
		\node at (0,0) {$(X_1^n,X_2^n)$};
		\draw[->,>=stealth] (0.8,0)--(1.5,0);
		
		\node at (2.25,0) {Encoder};
		\draw (1.5,-0.5) rectangle (3.0,0.5);
		
		\draw[->,>=stealth] (3.0,0)--(4,0);
		\node at (3.5,0.2) {$W$};
		
		\node at (4.75,0) {Decoder};
		\draw (4.0,-0.5) rectangle (5.5,0.5);
		
		\draw[->,>=stealth] (5.5,0)--(6.0,0);
		\node [right] at (6.0,0.5) {$\hat{X}_1^n (D_1)$};
		\node [right] at (6.0,0) {$\hat{X}_2^n (D_2)$};
		\node [right] at (6.0,-0.5) {$\hat{S}^n (D_s)$};
		
		\node at (0.25,-1.2) {$Y^n$};
		\draw[->,>=stealth] (0.8,-1.2)--(4.75,-1.2)--(4.75,-0.5);
		\draw[->,>=stealth] (2.25,-1.2)--(2.25,-0.5);
		
		\end{tikzpicture}
		\caption{The equivalent system model.}
		\label{fig_seg-inference-SI-model-eq}
	\end{figure}
	%
	
	Formally, an $\left(n,2^{nR}\right)$ code is defined by the encoding function 
	\begin{align*}
	En: \cX_1^n\times \cX_2^n \times\cY^n \rightarrow \{1,2,\cdots,2^{nR}\}
	\end{align*}
	and the decoding function
	\begin{align*}
	De: \{1,2,\cdots,2^{nR}\}\times\cY^n \rightarrow \hat{\cX}_1^n\times\hat{\cX}_2^n\times\hat{\cS}^n. 
	\end{align*}
	Let $\mathbb{R}^+$ be the set of nonnegative real numbers. 
	We consider bounded per-letter distortion functions $d_1:\cX_1\times\hat{\cX}_1\rightarrow\mathbb{R}^+$, $d_2:\cX_2\times\hat{\cX}_2\rightarrow\mathbb{R}^+$, and $d_s:\cS\times\hat{\cS}\rightarrow\mathbb{R}^+$. 
	The distortions between length-$n$ sequences are defined by 
	\begin{align*}
	d_1(x_1^n,\hat{x}_1^n) &\triangleq \frac{1}{n}\sum_{i=1}^n d_1(x_{1,i},\hat{x}_{1,i}),   \\
	d_2(x_2^n,\hat{x}_2^n) &\triangleq \frac{1}{n}\sum_{i=1}^n d_2(x_{2,i},\hat{x}_{2,i}),    \\
	d_s(s^n,\hat{s}^n) &\triangleq \frac{1}{n}\sum_{i=1}^n d_s(s_i,\hat{s}_i). 
	\end{align*}
	A nonnegative rate-distortion tuple $(R,D_1,D_2,D_s)$ is said to be {\it achievable} if for sufficiently large $n$, 
	there exists an $\left(n,2^{nR}\right)$ code such that 
	\begin{align*}
	\lim_{n\rightarrow\infty} \bE d_1(X_1^n,\hat{X}_1^n)&\leq D_1,  \\
	\lim_{n\rightarrow\infty} \bE d_2(X_2^n,\hat{X}_2^n)&\leq D_2,  \\
	\lim_{n\rightarrow\infty} \bE d_s(S^n,\hat{S}^n)&\leq D_s. 
	\end{align*}
	The rate-distortion function $R(D_1,D_2,D_s)$ is the infimum of coding rate $R$ for distortions $(D_1,D_2,D_s)$ 
	such that the rate-distortion tuple $(R,D_1,D_2,D_s)$ is achievable. 
	Our goal is to characterize the rate-distortion function. 

	\subsection{Preliminaries}
	\subsubsection{Conditional rate-distortion function}
	The elegant rate-distortion function was investigated and fully characterized in~\cite{Shannon-RD-1959}. 
	Assume the length-$n$ source sequence $X^n$ is independent and identically distributed (i.i.d.) over $\cX$ with generic random variable $X$ 
	and $d:\cX\times\hat{\cX}\rightarrow\mathbb{R}^+$ be a bounded per-letter distortion measure. 
	The rate-distortion function for a given distortion criterion $D$ is given by 
	\begin{equation}
	R(D)=\min_{p(\hat{x}|x):\bE d(X,\hat{X})\leq D} I(X;\hat{X}).  \label{basic-RDfn}
	\end{equation}
	It was proved in \cite{Shannon-RD-1959} and also introduced in \cite{Berger-book-71,Berger-book-78,Raymond-book,NetworkIT-book,Thomas-Cover-book} that $R(D)$ is a {\it non-increasing} and {\it convex} function of $D$. 
	
	If both the encoder and decoder are allowed to observe side information $Y^n$ (with generic variable $Y$ over $\cY$ jointly distributed with $X$), as depicted in Fig.~\ref{fig_RD-SI-model}, 
	then the tradeoff is called the conditional rate-distortion function~\cite{Berger-book-71,Gray-cRD-1972,Gray-cRD-1973}, which is characterized as 
	\begin{equation}
	R_{X|Y}(D)=\min_{p(\hat{x}|x,y):\bE d(X,\hat{X})\leq D} I(X;\hat{X}|Y).  \label{basic-RDfn-SI}
	\end{equation}
	%
	\begin{figure}[t!]
		\centering
		\begin{tikzpicture}
		\node at (3.5,0) {$X^n$};
		\draw[->,>=stealth] (4.0,0)--(5,0);
		
		\node at (5.75,0) {Encoder};
		\draw (5,-0.5) rectangle (6.5,0.5);
		
		\draw[->,>=stealth] (6.5,0)--(8,0);
		\node at (7.25,0.2) {$W$};
		
		\node at (8.75,0) {Decoder};
		\draw (8.0,-0.5) rectangle (9.5,0.5);
		
		\draw[->,>=stealth] (9.5,0)--(10,0);
		\node at (10.75,0) {$\hat{X}^n (D)$};
		
		\node at (3.5,-1.25) {$Y^n$};
		\draw[->,>=stealth] (4.0,-1.25)--(8.75,-1.25)--(8.75,-0.5);
		\draw[->,>=stealth] (5.75,-1.25)--(5.75,-0.5);
		
		\end{tikzpicture}
		\caption{Conditional Rate-distortion model.}
		\label{fig_RD-SI-model}
	\end{figure}
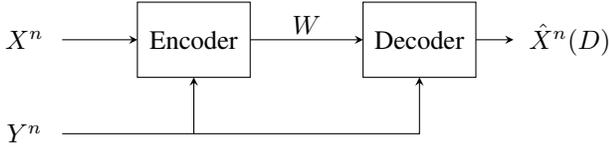
	%
	It is shown in \cite{Gray-cRD-1972} that the conditional rate-distortion function can also be obtained as 
	the weighted sum of the marginal rate-distortion function of sources with distribution $P_{X|Y}(\cdot|y),y\in\cY$, i.e., 
	\begin{equation}
	R_{X|Y}(D) = \min_{\{D_y:y\in\cY\}:\sum_{y\in\cY}p(y)\cdot D_y\leq D} \sum_{y\in\cY}p(y)\cdot R(D_y),  \label{basic-RDfn-SI-alt}
	\end{equation}
	where for any $y\in\cY$, $R(D_y)$ is obtained from \eqref{basic-RDfn} through replacing the source distribution by $P_{X|Y}(\cdot|y)$. 
	This property will be useful for evaluating conditional rate-distortion functions of given source distributions. 
	If $(X,Y)$ is a {\it doubly symmetric binary source} (DSBS) with parameter $p_0$, i.e., 
	\begin{equation}
	p(x,y)= \left[\begin{array}{cc}
	\frac{1-p_0}{2} & \frac{p_0}{2} \\
	\frac{p_0}{2} & \frac{1-p_0}{2}
	\end{array}\right],  \label{DSB-ditrbution}
	\end{equation}
	then the conditional rate distortion function is given in \cite{Gray-cRD-1973} by 
	\begin{equation}
	R_{X|Y}(D) = \left[h_b(p_0) - h_b(D)\right] \cdot \mathds{1}_{_{0\leq D\leq p_0}},   \label{basic-RDfn-SI-binary}
	\end{equation}
	where $h_b(q)=-q\log q - (1-q)\log(1-q)$ is the entropy for a Bernoulli($q$) distribution 
	and $\mathds{1}_{A}$ is the indicator function of whether event $A$ happens. 
	
	\subsubsection{Rate-distortion function with two constraints}
	The scenario was discussed in \cite{ElGamal-Cover-82,Thomas-Cover-book} that we wish to describe the i.i.d. source sequence $X^n$ at rate $R$
	and recover two reconstructions $\hat{X}_a^n$ and $\hat{X}_b^n$ with distortion criteria $\bE d_a(X^n,\hat{X}_a^n)\leq D_a$ and $\bE d_b(X^n,\hat{X}_b^n)\leq D_b$, respectively. 
	The rate-distortion function is given by 
	\begin{equation}
	R_{\text{2d}}(D_a,D_b)= \min_{\substack{p(\hat{x}_a,\hat{x}_b|x):~\\ \bE d_a(X,\hat{X}_a)\leq D_a \\ \bE d_b(X,\hat{X}_b)\leq D_b}} I(X;\hat{X}_a,\hat{X}_b).  \label{basic-RDfn-2d}
	\end{equation}
	Comparing \eqref{basic-RDfn} and \eqref{basic-RDfn-2d}, we easily see that 
	\begin{equation*}
	\max\{R(D_a), R(D_b)\}\leq R_{\text{2d}}(D_a,D_b) \leq R(D_a)+R(D_b). 
	\end{equation*}
	For the special case where $\hat{\cX}_a=\hat{\cX}_b$ and $d_a(x,\hat{x})=d_b(x,\hat{x})$ for all $x\in\cX$ and $\hat{x}\in\hat{\cX}_a$, 
	it suffice to recover only one sequence $\hat{X}_a^n=\hat{X}_b^n$ with distortion $\min\{D_a,D_b\}$. 
	Then both distortion constraints are satisfied since 
	\begin{equation*}
	\bE d_a(X^n,\hat{X}_a^n) = \min\{D_a,D_b\} \leq D_a 
	\end{equation*}
	and
	\begin{equation*}
	\bE d_b(X^n,\hat{X}_b^n) = \min\{D_a,D_b\} \leq D_b. 
	\end{equation*}
	This implies 
	\begin{equation}
	R_{\text{2d}}(D_a,D_b)= R(\min\{D_a,D_b\}) = \max\{R(D_a),R(D_b)\},  \label{basic-RDfn-2d-minD}
	\end{equation}
	where the second equality follows from the non-increasing property of $R(D)$. 
	
	When side information is available at the decoder for only one of the two reconstructions, e.g., $\hat{X}_b$, 
	it was proved in \cite{Heegard-Berger-85,Kaspi-94} that successive encoding (first $\hat{X}_a$, then $\hat{X}_b$) is optimal. 
	For the case when the two reconstructions have access to different side information respectively, 
	the rate-distortion tradeoff was characterized in \cite{Heegard-Berger-85,complementary_delivery-06,Watanabe-product_source-13,NetworkIT-book}.

	\subsubsection{Rate-distortion function of two sources}
	The problem of compressing two i.i.d. source sequences $X_a^n$ and $X_b^n$ at the same encoder is considered in \cite[Problem. 10.14]{Thomas-Cover-book}. 
	The rate-distortion function is given therein, which is 
	\begin{equation}
	R_{\text{2s}}(D_a,D_b)= \min_{\substack{p(\hat{x}_a,\hat{x}_b|x_a,x_b):~\\ \bE d_a(X_a,\hat{X}_a)\leq D_a \\ \bE d_b(X_b,\hat{X}_b)\leq D_b}} I(X_a,X_b;\hat{X}_a,\hat{X}_b).  \label{basic-RDfn-2s}
	\end{equation}
	It is also shown that for two independent sources, compressing simultaneously is the same as compressing separately in terms of the rate and distortions, i.e., 
	\begin{equation}
	R_{\text{2s}}(D_a,D_b)= R(D_a) + R(D_b). \label{RD-2sources-indep}
	\end{equation}
	If the two sources are dependent, the equality in \eqref{RD-2sources-indep} can be false, and the Slepian-Wolf rate region~\cite{Slepian-Wolf-73} indicates that joint entropy of the two source variables is sufficient and optimal for lossless reconstructions. 
	Taking into account distortions, Gray showed via an example in~\cite{Gray-cRD-1973} that the compression rate can be strictly larger than $R(D_a) + R_{X_b|X_a}(D_b)$ in general. 
	At last, some related results for compressing compound sources can be found in~\cite{composite-sources-84}.

	\section{Optimal Rate-distortion Tradeoff}\label{section:RD-function}
	\subsection{The Rate-distortion Function}
	
	\begin{theorem}\label{thm:main}
		The rate-distortion function for compression and inference with side information is given as the solution to the following optimization problem
		\begin{align}
		R(D_1,D_2,D_s) &= \min I(X_1,X_2;\hat{X}_1,\hat{X}_2,\hat{S}|Y)  \label{RD-fn_rate} \\
		& \quad \textup{ s.t. } \bE d_1(X_1,\hat{X}_1)\leq D_1  \label{RD-fn_constraint-1} \\
		& \qquad\quad \bE d_2(X_2,\hat{X}_2)\leq D_2  \label{RD-fn_constraint-2} \\
		& \qquad\quad \bE d_s'(X_1,\hat{S})\leq D_s, \label{RD-fn_constraint-3}
		\end{align} 
		where the minimum is taken over all conditional pmf $p(\hat{x}_1,\hat{x}_2,\hat{s}|x_1,x_2,y)$ and 
		the modified distortion measure is defined by 
		\begin{equation}
		d_s'(x_1,\hat{s})=\frac{1}{p(x_1)}\sum_{s\in\cS}p(x_1,s)d_s(s,\hat{s}).   \label{distortion-s-alt}
		\end{equation}
	\end{theorem}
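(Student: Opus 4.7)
The statement is a remote (indirect) source coding theorem with two observation parts and common side information at both encoder and decoder. The essential device is the per-index Markov chain
\[
S_i - X_{1,i} - (X_{2,i},\, Y_i,\, \hat{X}_{1,i},\, \hat{X}_{2,i},\, \hat{S}_i),
\]
which holds because $(S_i, X_{1,i}, X_{2,i}, Y_i)$ are i.i.d.\ with $S - X_1 - (X_2, Y)$ and because $(\hat{X}_{1,i}, \hat{X}_{2,i}, \hat{S}_i)$ is a deterministic function of $(W, Y^n)$ with $W = f(X_1^n, X_2^n, Y^n)$; given $X_{1,i}$, the variable $S_i$ is therefore independent of $(X_1^{i-1}, X_{1,i+1}^n, X_2^n, Y^n)$ and hence of $\hat{S}_i$. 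Taking expectations,
\[
\bE\, d_s(S_i, \hat{S}_i) = \bE\!\left[\sum_{s} p(s \mid X_{1,i})\, d_s(s, \hat{S}_i)\right] = \bE\, d_s'(X_{1,i}, \hat{S}_i),
\]
which converts the distortion on the unobservable $S$ into an ordinary distortion on the observable $X_1$ via $d_s'$.

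For the \emph{converse}, I would start from $nR \ge H(W \mid Y^n) \ge I(X_1^n, X_2^n; W \mid Y^n)$, chain-expand in $i$, and apply the memoryless property together with ``conditioning reduces entropy.'' Because $(\hat{X}_1^n, \hat{X}_2^n, \hat{S}^n)$ is a function of $(W, Y^n)$, data processing yields
\[
nR \ge \sum_{i=1}^n I(X_{1,i}, X_{2,i};\, \hat{X}_{1,i}, \hat{X}_{2,i}, \hat{S}_i \mid Y_i).
\]
Since $(X_{1,i}, X_{2,i}, Y_i) \sim p(x_1, x_2, y)$ for every $i$, convexity of mutual information in the test channel (for fixed input distribution) single-letterizes the sum to $I(X_1, X_2; \hat{X}_1, \hat{X}_2, \hat{S} \mid Y)$ under the averaged conditional pmf, which still satisfies the three modified distortion constraints by linearity of expectation together with the conversion above.

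For \emph{achievability}, fix a minimizer $p(\hat{x}_1, \hat{x}_2, \hat{s} \mid x_1, x_2, y)$ in the statement. Using the conditional covering lemma, generate $2^{nR}$ codewords $(\hat{X}_1^n, \hat{X}_2^n, \hat{S}^n)(w)$ conditionally on $Y^n$ from $\prod_{i=1}^n p(\hat{x}_1, \hat{x}_2, \hat{s} \mid Y_i)$, where the latter is the marginal induced by the chosen test channel. The encoder picks any index $w$ whose codeword is jointly $\epsilon$-typical with $(X_1^n, X_2^n, Y^n)$; success occurs with probability approaching one whenever $R > I(X_1, X_2; \hat{X}_1, \hat{X}_2, \hat{S} \mid Y)$. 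The typical-average lemma then bounds the per-letter distortions on $(X_1, \hat{X}_1)$, $(X_2, \hat{X}_2)$, and $d_s'(X_1, \hat{S})$; the bound on $\bE\, d_s(S^n, \hat{S}^n)$ follows from the distortion-conversion identity applied to the constructed scheme, since $\hat{S}_i$ remains a function of $(W, Y^n)$.

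I expect the main technical obstacle to be the distortion-conversion step: one must verify that the Markov chain $S_i - X_{1,i} - \hat{S}_i$ survives conditioning on the entire $Y^n$ (rather than just $Y_i$), and that it continues to hold for the randomized codebook used in the achievability. Both reductions follow from the factorization $p(x_1, x_2, y)\, p(s \mid x_1)$ and the i.i.d.\ structure, but they have to be argued carefully because the encoder and decoder share $Y^n$. Beyond this point, the proof is a routine adaptation of the Wyner--Ziv and Gray conditional rate-distortion templates.
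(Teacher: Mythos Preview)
Your proposal is correct and follows essentially the same approach as the paper: a joint-typicality (conditional covering) achievability and a standard memoryless-source converse that single-letterizes $I(X_1^n,X_2^n;\hat X_1^n,\hat X_2^n,\hat S^n\mid Y^n)$ and then invokes convexity plus the identity $\bE d_s(S,\hat S)=\bE d_s'(X_1,\hat S)$. The paper's converse reaches $\sum_i I(X_{1,i},X_{2,i};\hat X_{1,i},\hat X_{2,i},\hat S_i\mid Y_i)$ through a longer explicit chain of mutual-information identities, whereas you get there more directly via ``conditioning reduces entropy'' on the second term of $H(X_1^n,X_2^n\mid Y^n)-H(X_1^n,X_2^n\mid \hat X_1^n,\hat X_2^n,\hat S^n,Y^n)$; both routes are standard and equivalent, and your explicit justification of the Markov chain $S_i-X_{1,i}-\hat S_i$ for the distortion conversion is a point the paper handles by citation.
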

	\begin{proof}
		We can interpret the problem as the combination of rate-distortion with two sources ($X_1$ and $X_2$), rate-distortion with two constraints ($X_1$ is recovered with two constraints $D_1$ and $D_s$), and conditional rate-distortion (conditioning on $Y$). 
		Then the theorem can be obtained informally by combining the rate-distortion functions in \eqref{basic-RDfn-SI}, \eqref{basic-RDfn-2d}, and \eqref{basic-RDfn-2s}.
		For completeness, we provide a rigorous technical proof in Appendix~\ref{proof-thm:main}. 
	\end{proof}

	\subsection{Some Properties}
	Similar to the rate-distortion function in \eqref{basic-RDfn}, we collect some properties in the following lemma. 
	The proof simply follows the same procedure as that for \eqref{basic-RDfn} in \cite{Shannon-RD-1959,Berger-book-71,Berger-book-78,Raymond-book,Thomas-Cover-book,NetworkIT-book}. 
	We omit the details here. 
	
	\begin{lemma}\label{lemma:RD-property}
		The rate-distortion function $R(D_1,D_2,D_s)$ is non-increasing and convex in $(D_1,D_2,D_s)$. 
	\end{lemma}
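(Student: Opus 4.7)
The plan is to mirror the classical monotonicity-and-convexity proof for the Shannon rate-distortion function. Two things carry over with essentially no new work: (i) the feasible set in \eqref{RD-fn_rate}--\eqref{RD-fn_constraint-3} only grows as any of $D_1,D_2,D_s$ increases, and (ii) with the source law $p(x_1,x_2,y)$ held fixed, the objective is convex in the test channel $p(\hat{x}_1,\hat{x}_2,\hat{s}\mid x_1,x_2,y)$ while the three expected-distortion constraints are affine in it.

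For monotonicity, I would fix $(D_1,D_2,D_s)\le (D_1',D_2',D_s')$ componentwise and observe that every conditional pmf $p(\hat{x}_1,\hat{x}_2,\hat{s}\mid x_1,x_2,y)$ feasible for $(D_1,D_2,D_s)$ is automatically feasible for $(D_1',D_2',D_s')$, since $\mathbb{E} d_1(X_1,\hat{X}_1)\le D_1\le D_1'$, and similarly for the other two constraints (note $d_s'$ is defined in terms of the fixed joint $p(x_1,s)$, so no issue with its behavior under the minimization). Hence the minimum over the larger set is at most the minimum over the smaller set, giving $R(D_1',D_2',D_s')\le R(D_1,D_2,D_s)$.

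For convexity, let $(D_1^{(i)},D_2^{(i)},D_s^{(i)})$ for $i=0,1$ be two distortion tuples and let $p^{(0)}, p^{(1)}$ be conditional pmfs (approximately) attaining the respective minima. For $\lambda\in[0,1]$ set $p^{(\lambda)}=\lambda p^{(1)}+(1-\lambda)p^{(0)}$. Because each expected distortion $\mathbb{E} d_j(X_j,\hat{X}_j)$ (and likewise $\mathbb{E} d_s'(X_1,\hat{S})$) is linear in the test channel, $p^{(\lambda)}$ satisfies the constraints at $\lambda(D_1^{(1)},D_2^{(1)},D_s^{(1)})+(1-\lambda)(D_1^{(0)},D_2^{(0)},D_s^{(0)})$. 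Since the joint $p(x_1,x_2,y)$ does not depend on the test channel, the induced joint $p(x_1,x_2,y,\hat{x}_1,\hat{x}_2,\hat{s})$ is itself an affine function of the test channel, and the standard fact that $I(X_1,X_2;\hat{X}_1,\hat{X}_2,\hat{S}\mid Y)$ is convex in the conditional pmf (with fixed input marginal) then yields
\begin{equation*}
R\bigl(\lambda D^{(1)}+(1-\lambda)D^{(0)}\bigr) \le \lambda R\bigl(D^{(1)}\bigr)+(1-\lambda)R\bigl(D^{(0)}\bigr),
\end{equation*}
where I abbreviate each triple by $D^{(i)}$.

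The only nontrivial ingredient is the convexity of the conditional mutual information in the test channel. I would invoke it directly as in Cover--Thomas: write $I(X_1,X_2;\hat{X}_1,\hat{X}_2,\hat{S}\mid Y)$ as an expectation of a KL divergence between the joint and the product of conditional marginals, both of which are affine in the test channel given the fixed input law, and apply the joint convexity of relative entropy. Passing to limits (to handle the case where the two optimizers are only $\varepsilon$-optimal) completes the proof and is the one routine subtlety to watch; otherwise the argument is the same as in \cite{Shannon-RD-1959,Berger-book-71,Thomas-Cover-book}.
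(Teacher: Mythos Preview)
Your proposal is correct and is exactly the standard monotonicity-and-convexity argument the paper has in mind; the paper itself omits the details and simply states that the proof ``follows the same procedure as that for \eqref{basic-RDfn}'' in the classical references. Your handling of the conditional mutual information's convexity in the test channel and the $\varepsilon$-optimizer limiting step is the only point requiring care, and you have identified it correctly.
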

	
	Recall from \eqref{RD-2sources-indep} that compressing two independent sources is the same as compressing them simultaneously. 
	Then one may query that whether the optimality of separate compression remains to hold here? 
	We answer the question in the following lemma. 
	\begin{lemma}\label{lemma:opt-separate-compression}
		If $X_1-Y-X_2$ forms a Markov chain, then 
		\begin{equation*}
		R(D_1,D_2,D_s)= R_{\text{2d},X_1|Y}(D_1,D_s) + R_{X_2|Y}(D_2),
		\end{equation*}
		where the conditional rate-distortion function with two constraints is given by 
		\begin{equation*}
		R_{\text{2d},X_1|Y}(D_1,D_s) = \min_{\substack{p(\hat{x}_1,\hat{s}|x_1,y): \\ \bE d_1(X_1,\hat{X}_1)\leq D_1 \\ \bE d_s'(X_1,\hat{S})\leq D_s}} I(X_1;\hat{X}_1,\hat{S}|Y) 
		\end{equation*}
		and the conditional rate-distortion function is given in \eqref{basic-RDfn-SI} and can be written by 
		\begin{equation*}
		R_{X_2|Y}(D)=\min_{p(\hat{x}_2|x_2,y):\bE d_2(X_2,\hat{X}_2)\leq D_2} I(X_2;\hat{X}_2|Y). 
		\end{equation*}
	\end{lemma}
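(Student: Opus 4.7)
The plan is to prove the lemma by sandwiching $R(D_1,D_2,D_s)$ between the sum $R_{\text{2d},X_1|Y}(D_1,D_s)+R_{X_2|Y}(D_2)$ from both sides, working directly with the single-letter characterization from \Cref{thm:main}. The hypothesis $X_1-Y-X_2$ enters only where the two subproblems need to be decoupled.

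For the achievability direction, I would pick a test channel $p^\star(\hat{x}_1,\hat{s}|x_1,y)$ attaining $R_{\text{2d},X_1|Y}(D_1,D_s)$ and a test channel $p^\star(\hat{x}_2|x_2,y)$ attaining $R_{X_2|Y}(D_2)$, and then form the product distribution
\begin{equation*}
p(\hat{x}_1,\hat{x}_2,\hat{s}|x_1,x_2,y)=p^\star(\hat{x}_1,\hat{s}|x_1,y)\,p^\star(\hat{x}_2|x_2,y).
\end{equation*}
Together with the Markov hypothesis, this makes the full joint factor as $p(y)p(x_1|y)p(x_2|y)p^\star(\hat{x}_1,\hat{s}|x_1,y)p^\star(\hat{x}_2|x_2,y)$, so conditioned on $Y$ the triple $(X_1,\hat{X}_1,\hat{S})$ is independent of the pair $(X_2,\hat{X}_2)$. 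Expanding the conditional entropies in \eqref{RD-fn_rate} with this independence gives exactly $I(X_1,X_2;\hat{X}_1,\hat{X}_2,\hat{S}|Y)=I(X_1;\hat{X}_1,\hat{S}|Y)+I(X_2;\hat{X}_2|Y)$; the three distortion constraints hold by construction since the reconstruction of $X_1$ via $(\hat{X}_1,\hat{S})$ and the reconstruction of $X_2$ via $\hat{X}_2$ are each drawn from their respective optimal test channels.

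For the converse, I would take an arbitrary feasible $p(\hat{x}_1,\hat{x}_2,\hat{s}|x_1,x_2,y)$ and decompose
\begin{equation*}
I(X_1,X_2;\hat{X}_1,\hat{X}_2,\hat{S}|Y)=H(X_1|Y)+H(X_2|Y)-H(X_1,X_2|Y,\hat{X}_1,\hat{X}_2,\hat{S}),
\end{equation*}
where the Markov hypothesis $X_1-Y-X_2$ is used to split $H(X_1,X_2|Y)$. Expanding the residual joint entropy by the chain rule as $H(X_1|Y,\hat{X}_1,\hat{X}_2,\hat{S})+H(X_2|X_1,Y,\hat{X}_1,\hat{X}_2,\hat{S})$ and then dropping conditioning variables in each summand yields the key inequality $H(X_1,X_2|Y,\hat{X}_1,\hat{X}_2,\hat{S})\leq H(X_1|Y,\hat{X}_1,\hat{S})+H(X_2|Y,\hat{X}_2)$. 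Substituting back gives $I(X_1,X_2;\hat{X}_1,\hat{X}_2,\hat{S}|Y)\geq I(X_1;\hat{X}_1,\hat{S}|Y)+I(X_2;\hat{X}_2|Y)$. Because the marginal $p(\hat{x}_1,\hat{s}|x_1,y)$ is feasible for $R_{\text{2d},X_1|Y}(D_1,D_s)$ and $p(\hat{x}_2|x_2,y)$ is feasible for $R_{X_2|Y}(D_2)$, infimizing over the original joint gives the desired lower bound.

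The hard part is the converse, specifically isolating the right entropy decomposition that uses $X_1-Y-X_2$ exactly once: the Markov chain is what allows $H(X_1,X_2|Y)$ to split additively, and without it the residual term $H(X_1,X_2|Y,\hat{X}_1,\hat{X}_2,\hat{S})$ could not be controlled by the marginal posterior entropies, so the two subproblems would remain coupled through the joint reconstruction. The achievability, by contrast, is essentially a bookkeeping verification once the product test channel is written down.
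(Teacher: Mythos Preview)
Your proposal is correct and follows essentially the same route as the paper's proof: the converse uses the Markov condition $X_1-Y-X_2$ to split $H(X_1,X_2|Y)$, then drops conditioning in the chain-rule expansion of $H(X_1,X_2|Y,\hat{X}_1,\hat{X}_2,\hat{S})$ to obtain the additive lower bound; the achievability builds the product test channel $p^\star(\hat{x}_1,\hat{s}|x_1,y)\,p^\star(\hat{x}_2|x_2,y)$ and verifies that the conditional mutual information splits exactly. The only cosmetic difference is that the paper presents the converse first and phrases the product construction as $p^*(x_1,\hat{x}_1,\hat{s}|y)\,p^*(x_2,\hat{x}_2|y)$, which is equivalent to yours once combined with $p(x_1|y)p(x_2|y)$.
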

	\begin{proof}
		The proof is given in Appendix~\ref{proof-lemma:opt-separate-compression}. 
	\end{proof}
	
	\begin{remark}
		Compared to the independence assumption for the equality in \eqref{RD-2sources-indep}, we have the {\it conditional} independence $I(X_1;X_2|Y)=0$ in \Cref{lemma:opt-separate-compression}. 
		This is intuitive since compared to the setting for \eqref{RD-2sources-indep}, we have the additional side information available to the encoder and decoder. 
	\end{remark}
	
	\subsection{Rate-distortion Function for Semantic Information}
	The indirect rate-distortion problem can be viewed as a special case of \Cref{thm:main} that only recovers the semantic information $S$, 
	i.e., $X_2$ and $Y$ are constants and $D_1=\infty$. 
	Denote the minimum achievable rate for a given distortion constraint $D_s$ by $R_s(D_s)$. 
	
	Consider the binary sources and assume $S$ and $X_1$ follow the doubly symmetric binary distribution, i.e., 
	\begin{equation}
	p(s,x_1)=\left[
	\begin{matrix}
	\frac{1-p}{2} & \frac{p}{2} \\
	\frac{p}{2} & \frac{1-p}{2} 
	\end{matrix}
	\right].  \label{DSB-ditrbution-SX1}
	\end{equation}
	The transition probability $p(x_1|s)$ can also be defined via the binary symmetric channel (BSC) in Fig.~\ref{fig_BSC}.
	\begin{figure}[!h]
		\centering
		\begin{tikzpicture}
		\node at (0,0) {$S$};
		\node at (0.3,0.5) {$0$};
		\node at (0.3,-0.5) {$1$};
		
		\draw (0.5,-0.5)--(2.5,-0.5)--(0.5,0.5)--(2.5,0.5)--cycle;
		
		\node at (3.0,0) {$X_1$};
		\node at (2.7,0.5) {$0$};
		\node at (2.7,-0.5) {$1$};
		
		\node at (1.5,0.75) {$1-p$};
		\node at (1.5,-0.75) {$1-p$};
		\node at (1.0,0) {$p$};
		\end{tikzpicture}
		\caption{Transition probability from $S$ to $X_1$: in terms of BSC.}
		\label{fig_BSC}
	\end{figure}
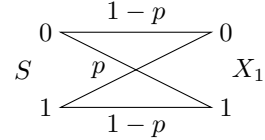
	Assume $p\leq 0.5$, which means that $X_1$ has a higher probability to reflect the same value as $S$. 
	Let $d_s:\cS\times\hat{\cS}\rightarrow\{0,1\}$ be the Hamming distortion measure. 
	Then the evaluation of $R_s(D_s)$ is given in the following lemma. 
	
	Let $R_{d_s'}(\cdot)$ be the ordinary rate-distortion function in \eqref{basic-RDfn} under the distortion measure $d_s'$ (c.f. \eqref{distortion-s-alt}).
	For notational simplicity, for $D_s\geq p$, define 
	\begin{equation}
	D_s^0\triangleq \frac{D_s-p}{1-2p}.  \label{def-Ds0}
	\end{equation}
	
	\begin{lemma}\label{lemma:RD-semantic}
		For binary sources in \eqref{DSB-ditrbution-SX1} and Hamming distortion, the rate-distortion function for semantic information is 
		\begin{equation}
		R_s(D_s)=R_{d_s'}(D_s),   \label{RD-fn-semantic}
		\end{equation}
		where $R_{d_s'}(D_s)=R\left(D_s^0\right)=\left[1-h_b\left(\frac{D_s-p}{1-2p}\right)\right]\cdot \mathds{1}_{_{p\leq D_s\leq 0.5}}.$
	\end{lemma}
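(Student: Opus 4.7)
\medskip

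\textbf{Proof proposal.} The plan is to reduce the claim to the classical binary rate-distortion formula by (i) specializing Theorem~\ref{thm:main} to the ``only-$\hat S$'' regime, (ii) writing down the modified distortion $d_s'$ explicitly for the DSBS, and (iii) observing that $d_s'$ is an affine function of the Hamming distortion between $X_1$ and $\hat S$, which lets us transfer the constraint to a Hamming constraint on a Bernoulli($1/2$) source.

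First I would specialize the formula in Theorem~\ref{thm:main}. Setting $X_2$ and $Y$ equal to constants and taking $D_1=\infty$ removes the constraints \eqref{RD-fn_constraint-1} and \eqref{RD-fn_constraint-2}, and the mutual-information expression collapses to $I(X_1;\hat{S})$ minimized over $p(\hat s\mid x_1)$ subject to $\mathbb{E}\,d_s'(X_1,\hat S)\leq D_s$. By definition this is exactly $R_{d_s'}(D_s)$, giving \eqref{RD-fn-semantic}. (A minor point worth noting is that the auxiliary reconstruction $\hat X_1$ becomes redundant in the minimization, since no constraint involves it and dropping it can only decrease the mutual information — this will be stated in one line.)

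Next I would evaluate $d_s'$ using \eqref{distortion-s-alt} together with the DSBS joint \eqref{DSB-ditrbution-SX1}. Since $p(x_1)=1/2$, a direct four-case computation gives $d_s'(x_1,\hat s)=p$ when $x_1=\hat s$ and $d_s'(x_1,\hat s)=1-p$ when $x_1\neq\hat s$, i.e.
\begin{equation*}
d_s'(x_1,\hat s)=p+(1-2p)\,d_H(x_1,\hat s),
\end{equation*}
where $d_H$ is Hamming distortion. Consequently, the constraint $\mathbb{E}\,d_s'(X_1,\hat S)\leq D_s$ is equivalent to $\mathbb{E}\,d_H(X_1,\hat S)\leq(D_s-p)/(1-2p)=D_s^0$, which is nonempty precisely when $D_s\geq p$ (otherwise $R_s(D_s)=\infty$, consistent with the indicator).

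Finally, because $X_1\sim\mathrm{Bernoulli}(1/2)$, the standard binary rate-distortion result gives $\min I(X_1;\hat S)=1-h_b(D_s^0)$ for $0\leq D_s^0\leq 1/2$, which translates to $p\leq D_s\leq 1/2$, and the minimum is $0$ for $D_s\geq 1/2$. Plugging in $D_s^0$ yields the claimed expression. I do not expect a genuine obstacle here: the only slightly delicate point is handling the affine offset in $d_s'$ cleanly and checking that the endpoint $D_s=1/2$ of the support of the indicator matches $D_s^0=1/2$ under the substitution \eqref{def-Ds0}; both are short.
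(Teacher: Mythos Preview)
Your proposal is correct and follows essentially the same route as the paper's proof: compute $d_s'$ explicitly for the DSBS to get $d_s'(x_1,\hat s)=p+(1-2p)d_H(x_1,\hat s)$, translate the $d_s'$-constraint into a Hamming constraint $P(X_1\neq\hat S)\le D_s^0$, and then invoke the standard Bernoulli$(1/2)$ rate-distortion formula. The only cosmetic difference is that you begin by explicitly specializing Theorem~\ref{thm:main} to justify $R_s(D_s)=R_{d_s'}(D_s)$, whereas the paper establishes that identification separately (via the equivalence $\mathbb{E}\,d_s'(X_1,\hat S)=\mathbb{E}\,d_s(S,\hat S)$) and in the appendix jumps straight to the computation; one small caveat is your aside that ``$R_s(D_s)=\infty$, consistent with the indicator'' for $D_s<p$ --- the indicator in the stated formula actually returns $0$ there, so that regime is simply outside the scope of the lemma rather than captured by the indicator.
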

	\begin{proof}
		The evaluation of $R_s(D_s)$ was given in \cite{Berger-book-71,Kipnis-ITW15}. 
		A simpler proof can be found in Appendix~\ref{proof-lemma:RD-semantic}. 
	\end{proof}
	\begin{remark}
		By the properties of the rate-distortion function in~\eqref{basic-RDfn} and the linearity between $D_s^0$ and $D_s$, we see that $R_s(D_s)$ is also non-increasing and convex in $D_s$. 
	\end{remark}
	
	\begin{remark}
		It is easy to check that $\frac{D-p}{1-2p}< D$ for $D<0.5$. 
		This implies that $R_s(D)>R(D) $ for $D<0.5$, where $R(D)$ is the ordinary rate-distortion function in \eqref{basic-RDfn}. 
		The inequality is intuitive from the data processing inequality that under the same distortion constraint $D$, recovering $S$ directly (with rate $R(D)$) is easier than recovering it from the observation $X_1$ (with rate $R_s(D)$). 
		Moreover, we see from the lemma that $D_s\geq p$, which means that the semantic information can never be losslessly recovered for $p>0$. 
		This can be induced from the fact that even we know the complete information of $X_1$, the best distortion for reconstructing $S$ is the distortion between $S$ and $X_1$ which is equal to $p$. 
		The rate-distortion functions $R_s(D)$ and $R(D)$ are illustrated in Fig.~\ref{fig_compare-RsD-RD} for $p=0.1$, which verifies the above observations. 
		For general source and distortion measure, we have $R_s(D)\geq R(D)$ where the equality holds only when $X_1$ determines $S$. 
		This can be easily proved by the data processing inequality and we omit the details here. 
		\begin{figure}[!t]
			\centering
			\includegraphics[scale=0.5]{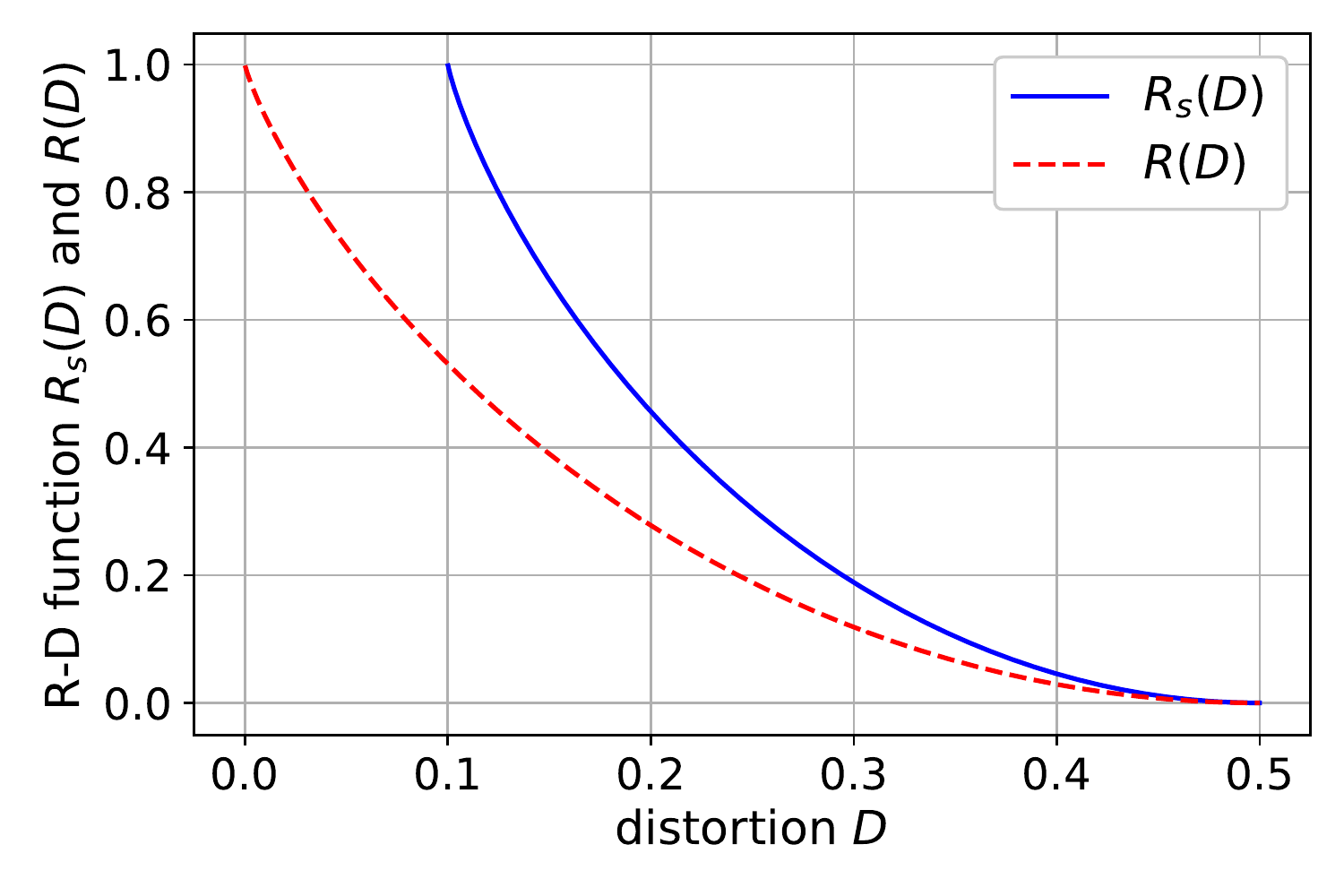}
			\caption{Comparison of rate-distortion functions $R_s(D)$ and $R(D)$ for $p=0.1$.}
			\label{fig_compare-RsD-RD}
		\end{figure}
	\end{remark}

	\begin{remark}
		We can imagine that $d_s'$ measures the distortion between the observation and reconstruction of semantic information. 
		Furthermore, it was shown in \cite{semantic-indirect-80,Liu-Zhang-Poor-ISIT21,Berger-book-71} that $d_s$ and $d_s'$ measure equivalent distortions, i.e.,
		\begin{align*}
		\bE d_s'(X_1,\hat{S}) &=\bE d_s(S,\hat{S})  \\
		\bE d_s'(X_1^n,\hat{S}^n) &=\bE d_s(S^n,\hat{S}^n).
		\end{align*}
		Then we can regard the system of compressing $X_1^n$ and reconstructing $\hat{S}^n$ as the ordinary rate-distortion problem with distortion measure $d_s'$. 
		Thus, $R_s(D_s)$ is equivalent to the ordinary rate-distortion function in \eqref{basic-RDfn} under distortion measure $d_s'$, which rigorously proves \eqref{RD-fn-semantic}. 
	\end{remark}

	\section{Case Study: Binary Sources}\label{section:example-binary}
	Assume $S$ and $X_1$ are doubly symmetric binary sources with distribution in \eqref{DSB-ditrbution-SX1}, 
	$X_2$ and $Y$ are both Bernoulli$(\frac{1}{2})$ sources. 
	The reconstructions are all binary, i.e., $\hat{\cX}_1=\hat{\cX}_2=\hat{\cS}=\{0,1\}$. 
	The distortion measures $d_1,d_2$, and $d_s$ are all assumed to be Hamming distortion. 
	We further assume that any two of $X_1$, $X_2$, and $Y$ are doubly symmetric binary distributed (c.f. \eqref{DSB-ditrbution}) with parameters $p_1,p_2$, and $p_3$, respectively. 
	Specifically, $(X_1,X_2)\sim \text{DSBS}(p_1)$, $(X_1,Y)\sim \text{DSBS}(p_2)$, $(X_2,Y)\sim \text{DSBS}(p_3)$. 
	
	Consider the following two examples that only differ in the source distributions.

	\subsection{Conditionally Independent Sources}
	Assume we have the Markov chain\footnotemark $X_1-Y-X_2$, i.e., $X_1$ and $X_2$ are independent conditioning on $Y$. 
	This assumption coincides with the intuitive understanding of $X_1$ and $X_2$ in \Cref{section:problemSetup}, 
	that the semantic feature can be independent with the background. 
	\footnotetext{Note that the Markov chain $X_1-Y-X_2$ indicates $p_1=p_2\star p_3\triangleq p_2(1-p_3)+p_3(1-p_2)$.}
	Then from \Cref{lemma:opt-separate-compression}, compressing $X_1^n$ and $X_2^n$ simultaneously is 
	the same as compressing them separately in terms of the optimal compression rate and distortions, 
	which implies the following theorem. 
	
	\begin{theorem}\label{thm:indep}
		The rate-distortion function for the above conditionally independent sources is given by 
		\begin{align*}
		R(D_1,D_2,D_s)	&= \big[h_b(p_3)-h_b(D_2)\big]\cdot \mathds{1}_{_{0\leq D_2\leq p_3}}   \nonumber \\
		&\hspace{-1.0cm} + \left[h_b(p_2)-h_b\big(\min\{D_1,D_s^0\}\big)\right]\cdot \mathds{1}_{_{0\leq \min\{D_1,D_s^0\}\leq p_2}}, 
		\end{align*}
		where $D_s^0=\frac{D_s-p}{1-2p}$ is defined in \eqref{def-Ds0}. 
	\end{theorem}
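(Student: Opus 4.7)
The plan is to obtain Theorem~\ref{thm:indep} as a direct combination of \Cref{lemma:opt-separate-compression}, the single-reconstruction reduction in \eqref{basic-RDfn-2d-minD}, and the DSBS conditional rate-distortion formula in \eqref{basic-RDfn-SI-binary}. The Markov chain $X_1-Y-X_2$ is assumed, so \Cref{lemma:opt-separate-compression} immediately yields the decomposition
\begin{equation*}
R(D_1,D_2,D_s) = R_{\text{2d},X_1|Y}(D_1,D_s) + R_{X_2|Y}(D_2),
\end{equation*}
and the task reduces to evaluating each term under the given binary distributions.

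The $X_2$ term is the easy one: since $(X_2,Y)\sim\text{DSBS}(p_3)$ with Hamming distortion, \eqref{basic-RDfn-SI-binary} directly gives $R_{X_2|Y}(D_2)=[h_b(p_3)-h_b(D_2)]\cdot\mathds{1}_{0\leq D_2\leq p_3}$, which is the first term in the claimed expression. For the $X_1$ term I would next show that, for the DSBS pair $(S,X_1)$ in \eqref{DSB-ditrbution-SX1}, the modified distortion $d_s'$ defined in \eqref{distortion-s-alt} evaluates to
\begin{equation*}
d_s'(x_1,\hat s)=
\begin{cases} p, & x_1=\hat s, \\ 1-p, & x_1\neq \hat s, \end{cases}
\end{equation*}
so that $\bE d_s'(X_1,\hat S)=p+(1-2p)\Pr(X_1\neq\hat S)$. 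Hence the semantic constraint $\bE d_s'(X_1,\hat S)\le D_s$ is equivalent to a Hamming constraint $\Pr(X_1\neq\hat S)\le D_s^0$ on the pair $(X_1,\hat S)$, with $D_s^0$ as in \eqref{def-Ds0}.

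Once both constraints in $R_{\text{2d},X_1|Y}(D_1,D_s)$ are expressed as Hamming constraints on the same source $X_1$, I would invoke the same argument that gave \eqref{basic-RDfn-2d-minD}: taking $\hat S=\hat X_1$ inside the minimization is feasible and achieves both constraints provided the reconstruction meets the stricter one, so
\begin{equation*}
R_{\text{2d},X_1|Y}(D_1,D_s) = R_{X_1|Y}\bigl(\min\{D_1,D_s^0\}\bigr).
\end{equation*}
The converse direction follows by restricting any two-reconstruction scheme to, say, $\hat X_1$ alone (respectively $\hat S$ alone), using the trivial bound $I(X_1;\hat X_1,\hat S\mid Y)\ge I(X_1;\hat X_1\mid Y)$, together with the non-increasing property in \Cref{lemma:RD-property}. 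Since $(X_1,Y)\sim\text{DSBS}(p_2)$, \eqref{basic-RDfn-SI-binary} then evaluates this to the second term of the theorem.

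The main obstacle is the two-constraint collapse in the conditional setting: one must check that the identity in \eqref{basic-RDfn-2d-minD} really does extend to $R_{\text{2d},X_1|Y}$ and that no cross-term advantage is lost by the reduction $\hat S=\hat X_1$. This is essentially a notational extension of the unconditional argument—choose the conditional test channel $p(\hat x_1\mid x_1,y)$ optimal for $R_{X_1|Y}(\min\{D_1,D_s^0\})$ and set $\hat S=\hat X_1$ on the achievability side, while on the converse side drop one reconstruction and use monotonicity. Everything else is a direct plug-in of \eqref{basic-RDfn-SI-binary} with the appropriate DSBS parameters.
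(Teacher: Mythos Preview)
Your proposal is correct and follows essentially the same route as the paper: invoke \Cref{lemma:opt-separate-compression} for the split, evaluate $R_{X_2|Y}(D_2)$ via \eqref{basic-RDfn-SI-binary}, and reduce $R_{\text{2d},X_1|Y}(D_1,D_s)$ to a single Hamming constraint at level $\min\{D_1,D_s^0\}$ using the explicit form of $d_s'$ together with the collapse argument behind \eqref{basic-RDfn-2d-minD}. The paper's proof is terser---it simply cites \eqref{basic-RDfn-SI-binary}, \eqref{basic-RDfn-2d-minD}, and \eqref{RD-fn-semantic} (the latter packaging the $d_s'$ computation you carry out inline)---but the underlying steps are the same.
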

	\begin{proof}
		The rate-distortion function  in \Cref{thm:main} satisfies 
		\begin{align*}
		&R(D_1,D_2,D_s) \nonumber \\
		&= R_{\text{2d},X_1|Y}(D_1,D_s) + R_{X_2|Y}(D_2)  \\
		&= \min_{\substack{\bE d_1(X_1,\hat{X}_1)\leq D_1 \\ \bE d_s'(X_1,\hat{S})\leq D_s}} I(X_1;\hat{X}_1,\hat{S}|Y) + \hspace{-0.1cm} \min_{\bE d_2(X_2,\hat{X}_2)\leq D_2} I(X_2;\hat{X}_2|Y)  \\
		&= \left[h_b(p_2)-h_b\big(\min\{D_1,D_s^0\}\big)\right]\cdot \mathds{1}_{_{0\leq \min\{D_1,D_s^0\}\leq p_2}} \nonumber \\
		&\qquad + \big[h_b(p_3)-h_b(D_2)\big]\cdot \mathds{1}_{_{0\leq D_2\leq p_3}}
		\end{align*}
		where the last step follows from the rate-distortion functions in \eqref{basic-RDfn-SI-binary}, \eqref{basic-RDfn-2d-minD} and \eqref{RD-fn-semantic}. 
	\end{proof}
	

	\subsection{Correlated Sources}
	Similar to that in \cite{Kipnis-ITW15}, evaluating the rate-distortion function for the general correlated sources can be extremely difficult. 
	Thus, we assume the Markov chain $Y-X_1-X_2$ behind the intuition that the side information $Y$ can help more to the semantic related source $X_1$. 
	
	Without the conditional independence of $X_1$ and $X_2$, 
	the optimality of separate compression in \Cref{lemma:opt-separate-compression} may not hold. 
	The rate-distortion function $R(D_1,D_2,D_s)$ in \Cref{thm:main} can be calculated as follows. 
	Recall from \eqref{def-Ds0} that $D_s^0\triangleq \frac{D_s-p}{1-2p}$. 
	For simplicity, we consider only small distortions in the set  
	\begin{align}
	\cD_0 &= \big\{(D_1,D_2,D_s): 0\leq \min\{D_1, D_s^0\}\leq p_1p_2  \nonumber \\
	&\hspace{2.8cm}\text{ and }0\leq D_2\leq p_1\big\}.  \label{Correlated-distortion_region}
	\end{align}
	
	\begin{theorem}\label{thm:correlated}
		For $(D_1,D_2,D_s)\in\cD_0$, the rate-distortion function for the above correlated sources is 
		\begin{align}
		&R(D_1,D_2,D_s)  \nonumber \\
		&=h_b(p_1)+h_b(p_2) - h_b\left(\min\left\{D_1,D_s^0\right\}\right)-h_b(D_2),  \label{RD-fn-correlated}
		\end{align}
		where $D_s^0=\frac{D_s-p}{1-2p}$ is defined in \eqref{def-Ds0}. 
	\end{theorem}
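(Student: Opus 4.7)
The plan is to establish both inequalities in the claimed formula using the mutual-information characterization of \Cref{thm:main}.

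\emph{Converse.} I would decompose the conditional mutual information via a chain rule in the order suggested by the Markov chain $Y-X_1-X_2$:
\begin{align*}
&I(X_1,X_2;\hat{X}_1,\hat{X}_2,\hat{S}\mid Y)\\
&\quad = I(X_1;\hat{X}_1,\hat{X}_2,\hat{S}\mid Y) + I(X_2;\hat{X}_1,\hat{X}_2,\hat{S}\mid X_1,Y).
\end{align*}
The first summand is bounded below by $I(X_1;\hat{X}_1,\hat{S}\mid Y)$, which is itself at least the two-constraint conditional rate-distortion $R_{\text{2d},X_1|Y}(D_1,D_s)$. Because both $d_1$ and $d_s'$ are Hamming-type measures on the binary $X_1$, \eqref{basic-RDfn-2d-minD} collapses the two constraints into a single Hamming constraint at level $\min\{D_1,D_s^0\}$, and Gray's formula \eqref{basic-RDfn-SI-binary} applied to $(X_1,Y)\sim\text{DSBS}(p_2)$ returns $h_b(p_2)-h_b(\min\{D_1,D_s^0\})$. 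For the second summand, the Markov relation gives $H(X_2\mid X_1,Y)=H(X_2\mid X_1)=h_b(p_1)$, and a binary Fano-type bound $H(X_2\mid X_1,Y,\hat{X}_1,\hat{X}_2,\hat{S})\le H(X_2\mid\hat{X}_2)\le h_b(D_2)$ delivers the remainder $h_b(p_1)-h_b(D_2)$. Summing yields the claimed lower bound.

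\emph{Achievability.} I would exhibit a backward test channel that attains equality in both steps. Choose a joint distribution on $(Y,\hat{X}_1,\hat{X}_2)$ with each coordinate $\text{Bern}(1/2)$ and with pairwise crossover probabilities $\hat{p}_{1Y},\hat{p}_{2Y},\hat{p}_{12}$ selected so that, after setting $X_1=\hat{X}_1\oplus E_1$ and $X_2=\hat{X}_2\oplus E_2$ with $E_1\sim\text{Bern}(\min\{D_1,D_s^0\})$, $E_2\sim\text{Bern}(D_2)$ mutually independent and independent of $(Y,\hat{X}_1,\hat{X}_2)$, the induced marginal on $(X_1,X_2,Y)$ reproduces the prescribed DSBS parameters and the Markov chain $Y-X_1-X_2$. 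Explicit matching yields $\hat{p}_{1Y}\star\min\{D_1,D_s^0\}=p_2$, $\hat{p}_{2Y}\star D_2=p_1\star p_2$, and $\hat{p}_{12}\star(\min\{D_1,D_s^0\}\star D_2)=p_1$. Taking $\hat{S}=\hat{X}_1$, the mutual independence of $E_1,E_2$ and their independence from $(\hat{X}_1,\hat{X}_2,Y)$ give $H(X_1,X_2\mid\hat{X}_1,\hat{X}_2,\hat{S},Y)=H(E_1)+H(E_2)=h_b(\min\{D_1,D_s^0\})+h_b(D_2)$, so the conditional mutual information equals the converse bound exactly while the distortion constraints hold with equality.

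\emph{Main obstacle.} The hard part is verifying that the backward test channel is \emph{consistent}: all three pairwise crossover parameters must lie in $[0,1/2]$, and all eight joint-probability entries of $(Y,\hat{X}_1,\hat{X}_2)$ must be non-negative. This is precisely the role of the distortion region $\cD_0$; the bounds $\min\{D_1,D_s^0\}\le p_1p_2$ and $D_2\le p_1$ are the conditions under which the required parameters are feasible and the joint is realizable, so the converse lower bound is attained on the nose. The rest of the argument is bookkeeping.
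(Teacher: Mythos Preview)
Your converse is correct and is essentially the paper's argument reorganized: you split via the chain rule as $I(X_1;\cdot\mid Y)+I(X_2;\cdot\mid X_1,Y)$, while the paper writes $H(X_1,X_2\mid Y)=h_b(p_1)+h_b(p_2)$ and bounds $H(X_1\oplus\hat X_1,X_2\oplus\hat X_2\mid\hat X_1,\hat X_2,\hat S,Y)$ directly; both routes deliver the same lower bound.

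The achievability has a gap. Prescribing only the three pairwise crossovers $\hat p_{1Y},\hat p_{2Y},\hat p_{12}$ together with $\text{Bern}(\tfrac12)$ marginals leaves one degree of freedom in the trivariate law of $(Y,\hat X_1,\hat X_2)$ unfixed, and matching the three pairwise DSBS parameters of the induced $(X_1,X_2,Y)$ does \emph{not} force its full joint to coincide with the source---in particular it does not force the Markov chain $Y-X_1-X_2$. A fourth equation (for the triple interaction) is required, and your feasibility discussion must include it; as written, the ``main obstacle'' paragraph checks only the three pairwise parameters, so the construction is under-specified and the consistency claim is unproved.

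The paper sidesteps this by passing to the XOR coordinates $Z_i=Y\oplus X_i$, $\hat Z_i=Y\oplus\hat X_i$. Under the source law the pair $(Z_1,Z_2)$ is \emph{independent} of $Y$ (this is precisely what the Markov chain $Y-X_1-X_2$ together with the DSBS symmetry yields), so the backward test channel only has to reproduce the two-dimensional law of $(Z_1,Z_2)$ from an input distribution $(q_1,\dots,q_4)$ on $(\hat Z_1,\hat Z_2)$ through independent $\text{Bern}(D_1)$ and $\text{Bern}(D_2)$ noises; taking $Y$ independent of $(\hat Z_1,\hat Z_2,Z_1,Z_2)$ then recovers the full source joint automatically, and the feasibility check reduces to $q_i\ge 0$. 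Your scheme becomes the paper's once you additionally impose $(\hat X_1\oplus Y,\hat X_2\oplus Y)\perp Y$ and choose that pair's joint law---this is the missing fourth constraint that pins down the trivariate distribution.
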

	\begin{proof}
		The proof is given in Appendix~\ref{proof-thm:correlated}. 
	\end{proof}
	
	From the distribution of $(X_1,X_2)$ and $(X_1,Y)$, and the Markov chain $Y-X_1-X_2$, 
	it is easy to check that $(X_2,Y)$ is doubly symmetric distributed with parameter $p_3=p_1\star p_2\triangleq p_1(1-p_2)+p_2(1-p_1)$. 
	Then comparing \eqref{RD-fn-correlated} with the rate of separate compression, we have for $(D_1,D_2,D_s)\in\cD_0$ that 
	\begin{align}
	&R_{\text{2d},X_1|Y}(D_1,D_s) + R_{X_2|Y}(D_2)  \nonumber \\
	&= \min_{\substack{\bE d_1(X_1,\hat{X}_1)\leq D_1 \\ \bE d_s'(X_1,\hat{S})\leq D_s}} I(X_1;\hat{X}_1,\hat{S}|Y) + \min_{\bE d_2(X_2,\hat{X}_2)\leq D_2} I(X_2;\hat{X}_2|Y)  \nonumber \\
	&= \big[h_b(p_2)-h_b(\min\{D_1,D_s^0\})\big] + \big[h_b(p_1\star p_2)-h_b(D_2)\big] \nonumber \\
	&\geq R(D_1,D_2,D_s),  \label{Correlated-compare-rates}
	\end{align}
	where the last inequality follows from the fact that $h_b(\cdot)$ is increasing in $[0,0.5]$ and $p_1\star p_2\geq p_1$ for $0\leq p_1\leq0.5$. 
	
	Note that the equality in \eqref{Correlated-compare-rates} holds only for $p_1=0.5$, which together with $Y-X_1-X_2$ imply the Markov chain $X_1-Y-X_2$ in \Cref{lemma:opt-separate-compression}. 
	Then the problem reduces to that in \Cref{thm:indep}. 
	For $p_1<0.5$, simultaneously compressing $X_1$ and $X_2$ is strictly better than separate compression. 
	
	%

	\section{Case Study: Binary Classification of Integers}\label{section:example-integer}
	Consider classification integers into even and odd. 
	Let $X_1$ be uniformly distributed over $\cX_1=[1:N]$ with $N\geq 4$ being even. 
	The semantic information $S$ is a binary random variable probabilistically indicates whether $X_1$ is even or odd. 
	The transition probability can be defined by BSC in Fig.~\ref{fig_BSC-even-odd}, which is similar to that in Fig.~\ref{fig_BSC} by replacing the value of $X_1$  with ``even" and ``odd". 
	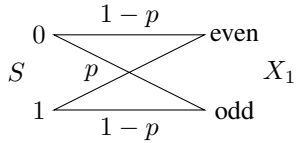
\begin{figure}[!h]
		\centering
		\begin{tikzpicture}
		\node at (0,0) {$S$};
		\node at (0.3,0.5) {$0$};
		\node at (0.3,-0.5) {$1$};
		
		\draw (0.5,-0.5)--(2.5,-0.5)--(0.5,0.5)--(2.5,0.5)--cycle;
		
		\node at (3.5,0) {$X_1$};
		\node at (2.9,0.5) {even};
		\node at (2.9,-0.5) {odd};
		
		\node at (1.5,0.75) {$1-p$};
		\node at (1.5,-0.75) {$1-p$};
		\node at (1.0,0) {$p$};
		\end{tikzpicture}
		\caption{Transition probability from the binary semantic information $S$ to the integer $X_1$: in terms of BSC.}
		\label{fig_BSC-even-odd}
	\end{figure}
	The binary side information $Y$ is correlated with $X_1$ also indicating its odevity (even/odd) similar to Fig.~\ref{fig_BSC-even-odd} with parameter $p_2$. 
	Assume the Markov chain $X_1-Y-X_2$ holds, and the Bernoulli($\frac{1}{2}$) source $X_2$ is independent with~$Y$. 
	We can verify that $X_2$ is independent with $(X_1,Y)$. 
	By \Cref{lemma:opt-separate-compression}, compressing $X_1^n$ and $X_2^n$ simultaneously is the same as compressing them separately. 
	For simplicity, we consider only small distortions in the set  
	\begin{align}
	\cD_1 &= \bigg\{(D_1,D_2,D_s): 0\leq \min\{D_1, D_s^0\}\leq \frac{2(N-1)p_2}{N}   \nonumber \\
	&\qquad\qquad\qquad\qquad  \text{ and }0\leq D_2\leq 0.5\bigg\}.  \label{Classify-distortion_region}
	\end{align}
	
	\begin{theorem}\label{thm:classify}
		For $(D_1,D_2,D_s)\in\cD_1$, the rate-distortion function for integer classification is 
		\begin{align*}
		R(D_1,D_2,D_s)&= \big[h_b(p_2)+\log(N/2) - h_b(\min\{D_1,D_s^0\})  \nonumber \\
		&\qquad -D_1\log(N-1)\big] + \big[1-h_b(D_2)\big], 
		\end{align*}
		where $D_s^0=\frac{D_s-p}{1-2p}$ is defined in \eqref{def-Ds0}. 
	\end{theorem}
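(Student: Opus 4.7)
The plan is to invoke Lemma~\ref{lemma:opt-separate-compression} and then evaluate the two resulting pieces. The hypothesis $X_1-Y-X_2$ together with $X_2\perp Y$ gives $X_2\perp(X_1,Y)$, so the lemma applies and
\[
R(D_1,D_2,D_s)=R_{\text{2d},X_1|Y}(D_1,D_s)+R_{X_2|Y}(D_2).
\]
The $X_2$-term is immediate: since $X_2$ is Bernoulli$(1/2)$ independent of $Y$, one has $R_{X_2|Y}(D_2)=R_{X_2}(D_2)=1-h_b(D_2)$, which already matches the second bracketed expression in the statement.

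The bulk of the proof is the conditional two-distortion term. I would decompose $X_1=(A,B)$, where $A=\mathrm{parity}(X_1)\sim\mathrm{Bernoulli}(1/2)$, $B$ is uniform on $[1{:}N/2]$ and independent of $(A,Y)$, and $(A,Y)$ is DSBS$(p_2)$. Then $H(X_1|Y)=h_b(p_2)+\log(N/2)$, so the claim is equivalent to showing $\max H(X_1|\hat{X}_1,\hat{S},Y)=h_b(\min\{D_1,D_s^0\})+D_1\log(N-1)$ at the optimum, under the joint distortion constraints.

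For the converse I would introduce the error indicator $E=\mathds{1}\{X_1\ne\hat{X}_1\}$ and write $H(X_1|\hat{X}_1,\hat{S},Y)\le H(E|\hat{X}_1,\hat{S},Y)+P(E{=}1)\log(N-1)$, whose second summand is at most $D_1\log(N-1)$ by the usual Fano step (at most $N-1$ candidates for $X_1$ on $\{E{=}1\}$). The refinement $H(E|\hat{X}_1,\hat{S},Y)\le h_b(\min\{D_1,D_s^0\})$ comes from a two-estimator Fano argument: $A$ is recoverable from $\hat{S}$ with error at most $D_s^0$ and from $\mathrm{parity}(\hat{X}_1)$ with error at most $D_1$ (because Hamming on $X_1$ dominates Hamming on parity), so the better of the two estimators has error $\le\min\{D_1,D_s^0\}$, and that controls the binary entropy of $E$. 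For achievability I would build a reverse test channel $P(X_1|\hat{X}_1,Y)$ that places mass $1-D^*$ on $X_1=\hat{X}_1$ and mass $D^*/(N-1)$ on each of the other $N-1$ symbols, with $D^*=\min\{D_1,D_s^0\}$. Matching the marginal consistency $\sum_{\hat{x}_1}P(\hat{x}_1|y)P(X_1|\hat{x}_1,y)=P(X_1|y)$ reduces to a pair of linear equations in the even/odd marginals of $\hat{X}_1|Y$; these admit a nonnegative solution precisely when $D^*\le 2(N-1)p_2/N$, which is the defining condition of $\cD_1$. Setting $\hat{S}=\mathrm{parity}(\hat{X}_1)$ satisfies both distortions (since $D^*\cdot\tfrac{N}{2(N-1)}\le D^*\le D_s^0$), and the mutual-information computation $I(X_1;\hat{X}_1,\hat{S}|Y)=H(X_1|Y)-H(X_1|\hat{X}_1,Y)$ returns the claimed expression.

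The hard part will be the two-estimator Fano step: a naive Fano that conditions only on $\hat{X}_1$ gives the looser bound $h_b(D_1)$, and pushing it down to $h_b(\min\{D_1,D_s^0\})$ requires a careful joint use of $\hat{S}$ and $\mathrm{parity}(\hat{X}_1)$ as estimators of $A$, while not sacrificing the $D_1\log(N-1)$ part. The achievability side is more routine; the only nontrivial check is the $\cD_1$ consistency condition on the reverse test channel, which is exactly what determines the distortion region in the theorem.
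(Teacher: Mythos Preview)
Your high-level plan---invoking Lemma~\ref{lemma:opt-separate-compression} to split off the $X_2$ term and then evaluating $R_{\text{2d},X_1|Y}(D_1,D_s)$---matches the paper exactly, and your achievability construction (symmetric reverse test channel on $[1{:}N]$, $\hat S$ taken as the parity of $\hat X_1$, marginal consistency with $P_{X_1|Y}$ forcing the $\cD_1$ condition) is essentially the paper's. The divergence is in the converse.

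The paper's converse is much simpler than what you propose: it does not attempt a single Fano decomposition that handles both distortion constraints at once. For the case $D_1\le D_s^0$ it simply discards $\hat S$, writes
\[
I(X_1;\hat X_1,\hat S\mid Y)\ \ge\ I(X_1;\hat X_1\mid Y)\ \ge\ H(X_1\mid Y)-H(X_1\mid \hat X_1),
\]
and applies the standard Fano bound for a uniform source on $[1{:}N]$ under Hamming distortion, yielding $h_b(D_1)+D_1\log(N-1)$. The complementary case $D_1\ge D_s^0$ is then handled by ``switching the roles'' of $(\hat X_1,D_1)$ and $(\hat S,D_s^0)$. No joint two-estimator argument is ever invoked.

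Your proposed two-estimator Fano step has a genuine gap, not just a missing detail. You claim $H(E\mid\hat X_1,\hat S,Y)\le h_b(\min\{D_1,D_s^0\})$ because both $\hat S$ and $\mathrm{parity}(\hat X_1)$ estimate the parity $A$ well. But $E=\mathds{1}\{X_1\ne\hat X_1\}$ is \emph{not} the parity-error indicator; knowing $A$ perfectly tells you nothing about whether $X_1$ equals $\hat X_1$ among the $N/2$ same-parity values. Concretely, take $N=4$, $Y$ constant, $\hat S=A$ (so $D_s^0=0$), and let $\hat X_1$ depend on $X_1$ only through its parity (say $\hat X_1=1$ if $X_1$ is odd, $\hat X_1=2$ if even, so $D_1=\tfrac12$). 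Then given $(\hat X_1,\hat S)$ the variable $X_1$ is uniform on two values and $H(E\mid\hat X_1,\hat S)=1$, while $h_b(\min\{D_1,D_s^0\})=h_b(0)=0$. So the inequality you want is false in general. The remedy is exactly the paper's: case-split on which of $D_1,D_s^0$ is smaller, drop the slack reconstruction variable, and apply the plain Fano bound to the remaining one.
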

	\begin{proof}
		The proof is given in Appendix~\ref{proof-thm:classify}. 
	\end{proof}

	\section{Case Study: Gaussian Sources}\label{section:example-Gaussian}
	Assume $S$ and $X_1$ are jointly Gaussian sources with zero mean and covariance matrix
	\begin{equation}
	\begin{bmatrix}
	\sigma_S & \sigma_{SX_1} \\
	\sigma_{SX_1} & \sigma_{X_1}
	\end{bmatrix}.
	\end{equation}
	Similarly, we assume the Markov chain $X_1-Y-X_2$, where $X_2$ and $Y$ are jointly Gaussian sources with zero mean and covariance matrix
	\begin{equation}
	\begin{bmatrix}
	\sigma_{X_2} & \sigma_{X_2Y} \\
	\sigma_{X_2Y} & \sigma_{Y}
	\end{bmatrix}. 
	\end{equation}
	Thus $X_1$ is conditionally independent of $X_2$ given $Y$. 
	Let the covariance of $X_1$ and $Y$ be $\sigma_{X_1Y}$. 
	The reconstructions are real scalars, i.e., $\hat{\mathcal{X}}_1=\hat{\mathcal{X}}_2=\hat{\mathcal{S}}=\mathbb{R}$. The distortion metrics are squared error.
	
	We see from Lemma 3 that compressing $X_1^n$ and $X_2^n$ simultaneously is the same as compressing them separately in terms of the optimal compression rate and distortions. 
	Then we have the following theorem.
	
	\begin{theorem}\label{thm:Gaussian}
		For the Gaussian sources, if the Markov chain $X_1-Y-X_2$ holds, the rate-distortion function is
		\begin{align}
		&R(D_1,D_2,D_s)=\frac{1}{2}\left(\log\frac{\sigma_{X_2}-\frac{\sigma_{X_2Y}^2}{\sigma_{Y}}}{D_2}\right)^+ + \nonumber  \\ 
		&\frac{1}{2}\left[\log\max\left(\frac{\sigma_{X_1}-\frac{\sigma_{X_1Y}^2}{\sigma_{Y}}}{D_1}, \frac{\sigma_{SX_1}^2\left(\sigma_{X_1}-\frac{\sigma_{X_1Y}^2}{\sigma_{Y}}\right)}{\sigma_{X_1}^2\left(D_s-\mathrm{mmse}\right)}\right)\right]^+,  \label{eqn:Gaussian SI}
		\end{align}
		where $\mathrm{mmse}$ is the minimum mean squared error for estimating $S$ from $X_1$, given by 
		\begin{equation}\label{eqn:mmse_def}
			\mathrm{mmse}= \sigma_S-\frac{\sigma_{SX_1}^2}{\sigma_{X_1}}.
		\end{equation}
	\end{theorem}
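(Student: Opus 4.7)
The plan is to reduce the Gaussian rate–distortion calculation to objects we already understand, by first invoking Lemma~\ref{lemma:opt-separate-compression}, then converting the semantic distortion on $\hat S$ into an equivalent MSE distortion on $X_1$, and finally applying the two–constraint rate–distortion identity \eqref{basic-RDfn-2d-minD} together with the conditional Gaussian rate–distortion formula.

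First, since the Markov chain $X_1-Y-X_2$ is assumed, Lemma~\ref{lemma:opt-separate-compression} gives the decomposition
\begin{equation*}
R(D_1,D_2,D_s)=R_{\text{2d},X_1|Y}(D_1,D_s)+R_{X_2|Y}(D_2).
\end{equation*}
The second summand is a standard conditional Gaussian rate–distortion function under MSE. Because $(X_2,Y)$ is jointly Gaussian, the conditional variance of $X_2$ given $Y$ is $\sigma_{X_2}-\sigma_{X_2Y}^2/\sigma_Y$, so directly
\begin{equation*}
R_{X_2|Y}(D_2)=\tfrac{1}{2}\left(\log\frac{\sigma_{X_2}-\sigma_{X_2Y}^2/\sigma_Y}{D_2}\right)^{+},
\end{equation*}
which is the first term in \eqref{eqn:Gaussian SI}.

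The substantive step is handling $R_{\text{2d},X_1|Y}(D_1,D_s)$, where the second constraint is given through the indirect measure $d_s'$. In the jointly Gaussian setting I would use the orthogonality principle: since the MMSE estimator of $S$ from $X_1$ is $\alpha X_1$ with $\alpha=\sigma_{SX_1}/\sigma_{X_1}$, for any reconstruction $\hat S$ that is a function of what the decoder sees,
\begin{equation*}
\bE\big[(S-\hat S)^2\big]=\mathrm{mmse}+\bE\big[(\alpha X_1-\hat S)^2\big],
\end{equation*}
with $\mathrm{mmse}$ as in \eqref{eqn:mmse_def}. Hence the constraint $\bE d_s'(X_1,\hat S)\le D_s$ is equivalent to $\bE[(X_1-\hat S/\alpha)^2]\le (D_s-\mathrm{mmse})/\alpha^2=(D_s-\mathrm{mmse})\sigma_{X_1}^2/\sigma_{SX_1}^2$. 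After this rescaling, both remaining constraints are MSE distortions on the single Gaussian source $X_1$, so the two–reconstruction identity \eqref{basic-RDfn-2d-minD}, applied conditionally on $Y$, yields
\begin{equation*}
R_{\text{2d},X_1|Y}(D_1,D_s)=R_{X_1|Y}\!\left(\min\!\left\{D_1,\tfrac{(D_s-\mathrm{mmse})\sigma_{X_1}^2}{\sigma_{SX_1}^2}\right\}\right).
\end{equation*}
Plugging in the conditional Gaussian formula $R_{X_1|Y}(D)=\tfrac12(\log[(\sigma_{X_1}-\sigma_{X_1Y}^2/\sigma_Y)/D])^+$ and pushing the inner $\min$ through the logarithm as a $\max$ of ratios produces exactly the second bracketed term in \eqref{eqn:Gaussian SI}.

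I expect the main obstacle to be the rigorous justification of the ``same-reconstruction'' reduction \eqref{basic-RDfn-2d-minD} in the \emph{conditional} Gaussian setting after the MMSE decomposition. One has to argue that, with $Y$ available at both ends and both constraints already being MSE on $X_1$, any jointly optimal pair $(\hat X_1,\hat S/\alpha)$ can be replaced by a single Gaussian reconstruction achieving the tighter of the two distortions without increasing $I(X_1;\hat X_1,\hat S\mid Y)$; this relies on the non-increasing monotonicity of $R_{X_1|Y}(\cdot)$ and a careful application of the orthogonality identity, together with checking that the $(\cdot)^+$ operation correctly handles the degenerate regime where either constraint is already slack (in particular $D_s\le \mathrm{mmse}$, which would make the semantic constraint infeasible, and $D_s\ge \sigma_S$, which would make it vacuous).
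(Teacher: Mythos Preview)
Your proposal is correct and rests on the same two ingredients as the paper's proof: the separation via Lemma~\ref{lemma:opt-separate-compression} and the MMSE (Wolf--Ziv) decomposition $\bE[(S-\hat S)^2]=\mathrm{mmse}+\alpha^2\,\bE[(X_1-\hat S/\alpha)^2]$ with $\alpha=\sigma_{SX_1}/\sigma_{X_1}$. Where you differ is in packaging. The paper treats $R_{\text{2d},X_1|Y}(D_1,D_s)$ by separately deriving $R_{S|Y}(D_s)$ (an estimate--then--compress upper bound plus an explicit entropy-chain lower bound), then argues $R_{\text{2d},X_1|Y}=\max\{R_{X_1|Y}(D_1),R_{S|Y}(D_s)\}$ via case-by-case achievability with $\hat S=\alpha\hat X_1$ or $\hat X_1=\hat S/\alpha$. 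You instead change variables $\tilde X_1=\hat S/\alpha$ first, so both constraints become MSE on $X_1$, and then invoke the two-constraint identity \eqref{basic-RDfn-2d-minD} conditionally; this bypasses the separate computation of $R_{S|Y}(D_s)$ and its entropy lower bound. Your route is shorter and arguably cleaner, while the paper's route has the side benefit of exhibiting $R_{S|Y}(D_s)$ explicitly. The ``obstacle'' you flag is not a real difficulty: once both constraints are MSE on the same source, the conditional version of \eqref{basic-RDfn-2d-minD} follows immediately from $I(X_1;\hat X_1,\tilde X_1\mid Y)\ge \max\{I(X_1;\hat X_1\mid Y),I(X_1;\tilde X_1\mid Y)\}$ together with the single-reconstruction achievability, and the $(\cdot)^+$ bookkeeping is exactly as you describe.
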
 
	\begin{proof}
		The proof is given in Appendix~\ref{proof-thm:Gaussian}.
	\end{proof}

	\section{Numerical Results}\label{section:plots}
	In this section, we plot the rate-distortion curves evaluated in the previous sections. 
	\subsection{Correlated Binary Sources}
	\begin{figure}[!t]
		\centering
		\includegraphics[scale=0.5]{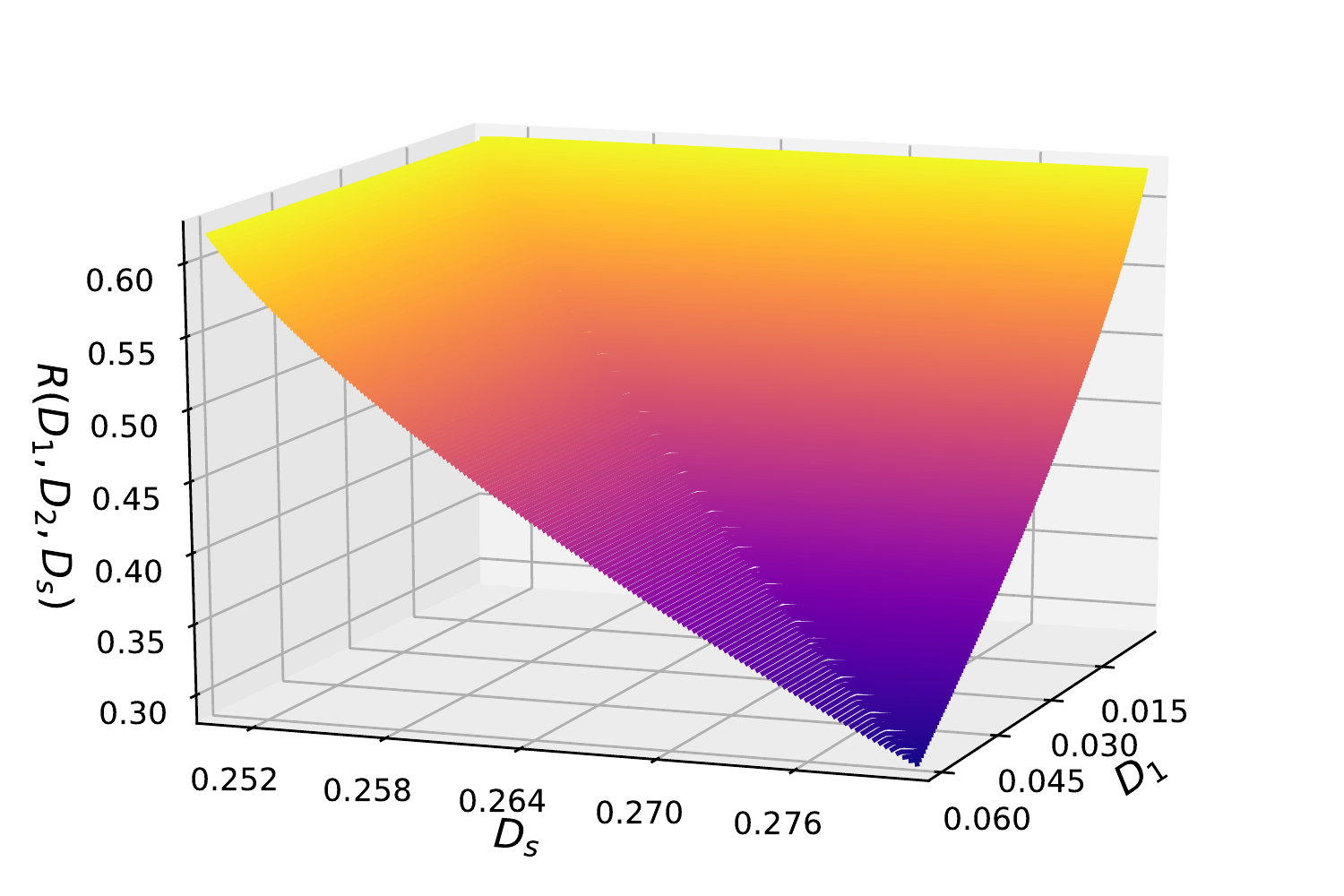}
		\caption{Rate-distortion function for correlated binary sources: $p=p_1=p_2=0.25$ and $D_2=0.5$. }
		\label{fig_binary-D1-Ds}
	\end{figure}
	
	\begin{figure}[!t]
		\centering
		\begin{subfigure}{.23\textwidth}
			\centering
			\includegraphics[scale=0.29]{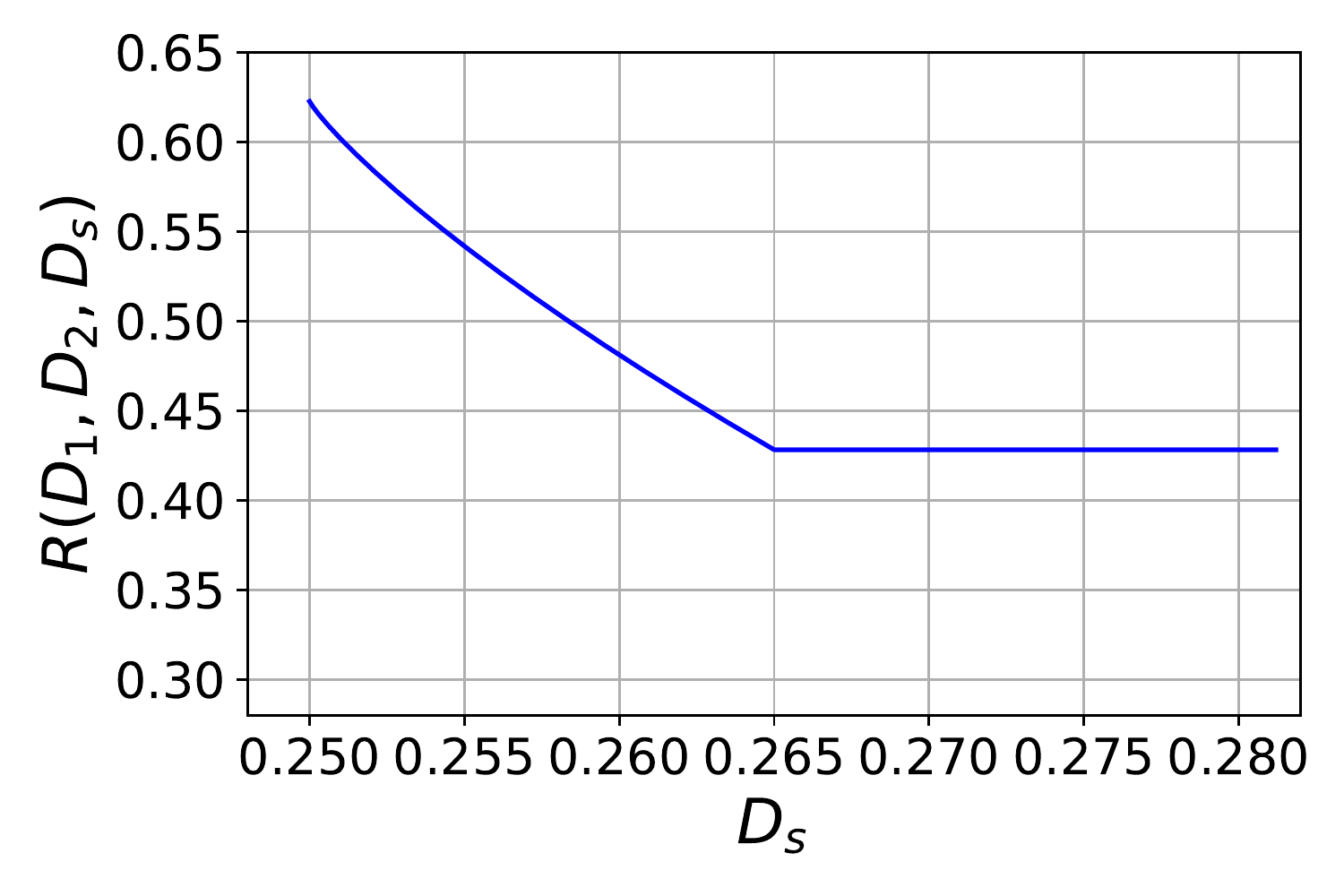}
			\caption{$D_1=0.03$}
			\label{fig_binary-Ds-1}
		\end{subfigure}
		~
		\begin{subfigure}{.23\textwidth}
			\centering
			\includegraphics[scale=0.29]{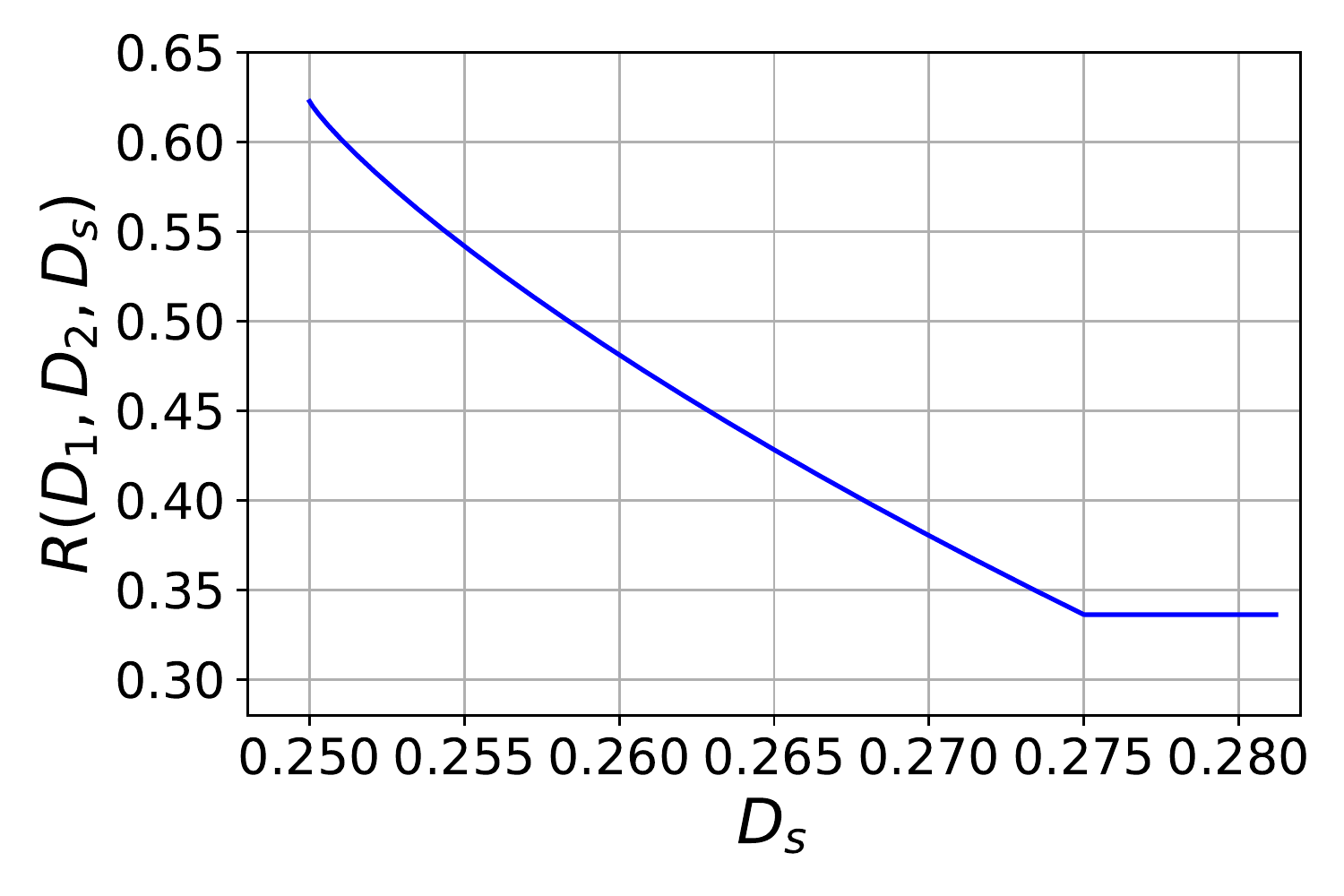}
			\caption{$D_1=0.05$}
			\label{fig_binary-Ds-2}
		\end{subfigure}
		\caption{Rate-distortion function for correlated binary sources: $p=p_1=p_2=0.25$ and $D_2=0.5$.}
		\label{fig_binary-Ds}
	\end{figure}
	Consider the rate-distortion function for correlated binary sources in \Cref{thm:correlated} with $p=p_1=p_2=0.25$ and $D_2=0.5$. 
	Fig.~\ref{fig_binary-D1-Ds} shows the 3-D plot of the optimal tradeoff between the coding rate $R$ and distortions $(D_1,D_s)$. 
	We can see that the rate-distortion function is decreasing and convex in $(D_1,D_s)$ for distortions in $\cD_0$ (c.f.~\eqref{Correlated-distortion_region}). 
	
	The truncated curves with $D_1=0.03$ and $D_1=0.05$ are shown in Fig.~\ref{fig_binary-Ds}. 
	We see that the rate is decreasing in $D_s$ until it achieves the minimum rate, which is determined by $D_1$. 
	Similar curves can also be obtained by truncating with some constant $D_s$. 
	

	\subsection{Binary Classification of Integers}
	The rate-distortion function for integer classification in \Cref{thm:classify} with $p=p_1=p_2=0.25$, $D_2=0.5$, and $N=8$ is illustrated in Fig.~\ref{fig_classify}. 
	Note that in both \Cref{thm:correlated} and \Cref{thm:classify}, $D_2=0.5$ indicates that $X_2$ can be recovered by random guessing, 
	which further implies that $X_2$ can also be regarded as side information at both sides. 
	
	Comparing the rates along the $D_1$ and $D_s$ axis in Fig.~\ref{fig_classify}, 
	we see that recovering only the semantic information can reduce the coding rate comparing to recovering the source message.
	
	Comparing Fig.~\ref{fig_classify} with Fig.~\ref{fig_binary-D1-Ds}, we see that the rate for integer classification decreases faster as $D_1$ increases (which is clearer at the minimum of $D_s=p$). 
	This implies that $D_1$ is more dominant (to determine the rate) here, 
	which is intuitive since the integer source has a larger alphabet and recovering it with different distortions requires a larger range of rates.
	
	\begin{figure}[!t]
		\centering
		\includegraphics[scale=0.5]{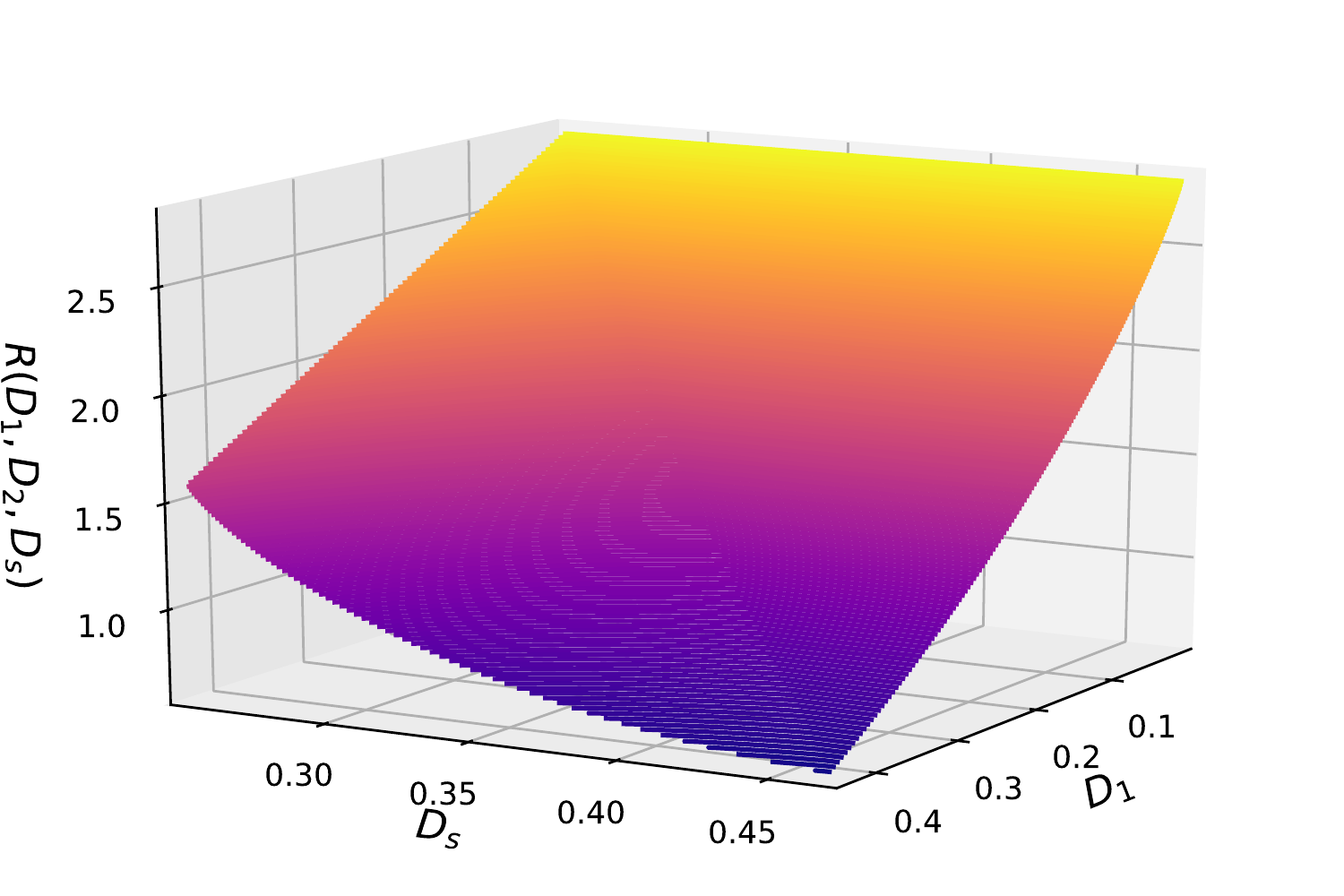}
		\caption{Rate-distortion function for integer classification: $p=p_1=p_2=0.25$, $D_2=0.5$, and $N=8$. }
		\label{fig_classify}
	\end{figure}

	\subsection{Gaussian Sources}
	Consider the rate-distortion function for Gaussian sources in Theorem \ref{thm:Gaussian}. 
	Let all of the variances be $2$, all of the covariances be $1$, and $D_2=1$. 
	
	The 3-D plot of the optimal tradeoff between the coding rate $R$ and distortions $(D_1, D_s)$ is illustrated in Fig.~\ref{fig:3D}. 
	We can see that the rate-distortion function is decreasing and convex in $(D_1, D_s)$. 
	The minimum rate is equal to 
	$$R_{X_2|Y}(D_2)=\frac{1}{2}\left(\log\frac{\sigma_{X_2}-\frac{\sigma_{X_2Y}^2}{\sigma_{Y}}}{D_2}\right)^+=0.20. $$
	
	The contour plot of the rate-distortion function is shown in Fig.~\ref{fig:2D}. 
	The slanted line denotes the situations that $R_{X_1|Y}(D_1)=R_{S|Y}(D_s)$. 
	We see that when $D_1$ is more dominant (the region above the slanted line), 
	the rate only needs to meet the distortion constraint to reconstruct $X_1$. 
	On the contrary, when $D_s$ is more dominant (the region below the slanted line), 
	the rate only needs to meet the distortion constraint to reconstruct $S$. 
	
	\begin{figure}[!thp]
		\centering
		\includegraphics[scale=0.5]{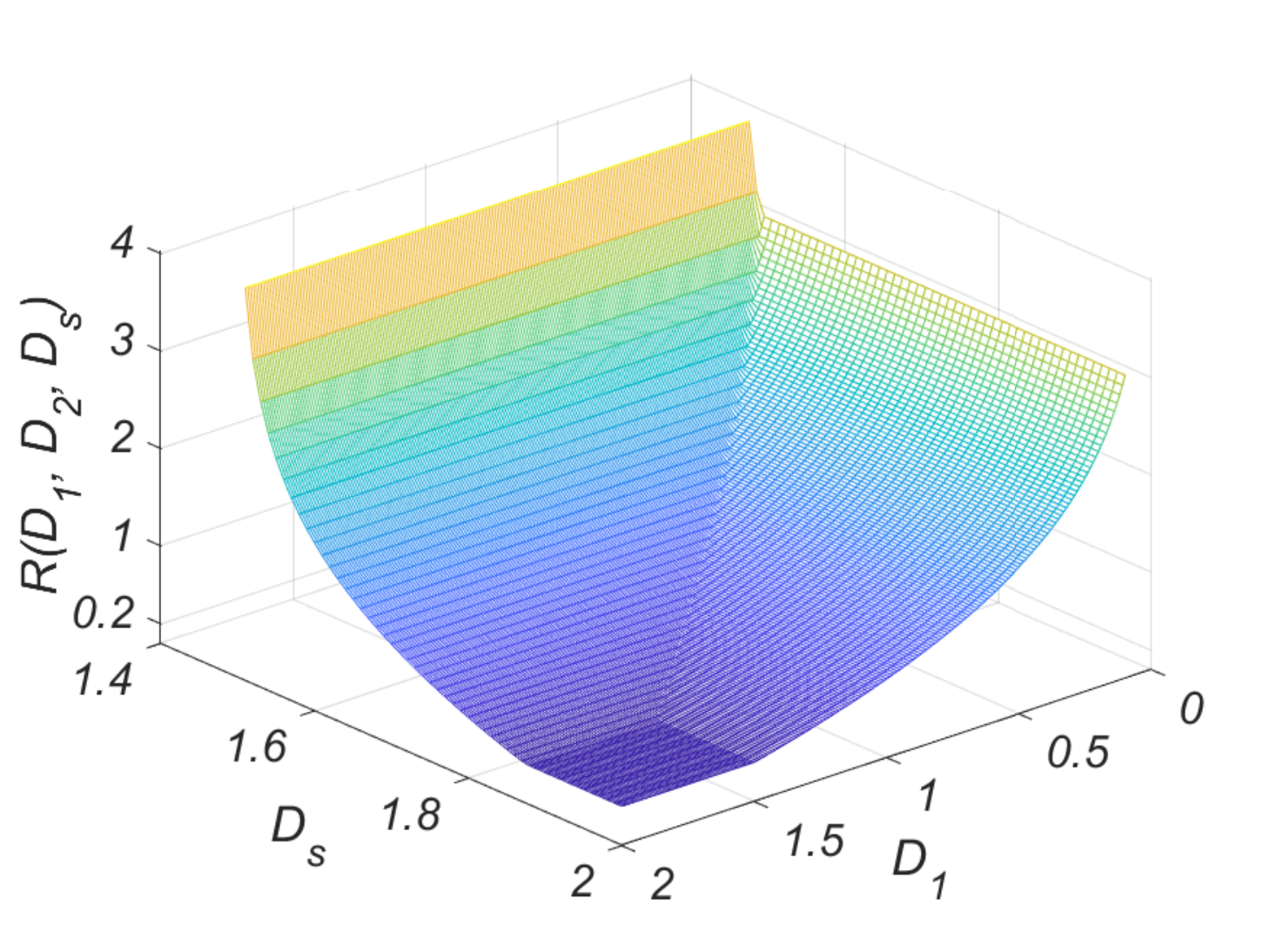}
		\caption{Rate-distortion function for Gaussian sources: $\sigma_S=\sigma_{X_1}=\sigma_{X_2}=\sigma_{Y}=1$, $\sigma_{SX_1}=\sigma_{X_1Y}=\sigma_{X_2Y}=2$, and  $D_2 = 1$.}
		\label{fig:3D}
	\end{figure}
	
	\begin{figure}[!thp]
		\centering
		\includegraphics[scale=0.3]{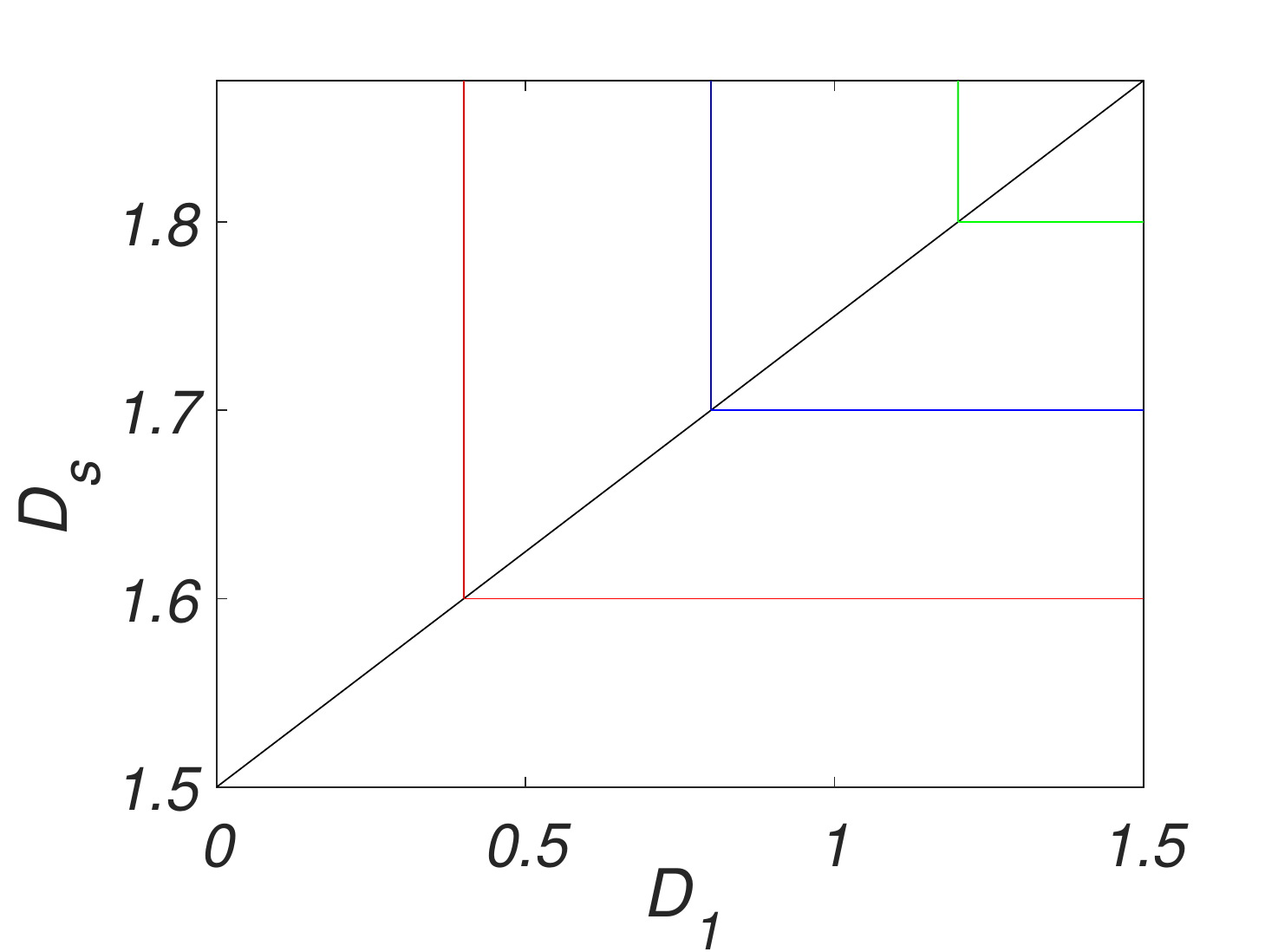}
		\caption{The contour plot of the rate-distortion function.}
		\label{fig:2D}
	\end{figure}

	\section{Conclusion}\label{section:conclusion}
	In this paper, we studied the semantic rate-distortion problem with side information motivated by task-oriented video compression. 
	The general rate-distortion function was characterized. 
	We also evaluated several cases with specific sources and distortion measures. 
	It is more desirable to derive the rate-distortion function for real video sources, 
	which is more challenging due to the high complexity of real source models and choice of meaningful distortion measures. 
	This part of work is now under investigation.

	\appendices
	\section{Proof of \Cref{thm:main}} \label{proof-thm:main}
	The achievability part is a straightforward extension of the joint typicality coding scheme for lossy source coding. 
	We simply present the coding ideas and analysis as follows. 
	Fix the conditional pmf $p(\hat{x}_1,\hat{x}_2,\hat{s}|x_1,x_2,y)$ such that the distortion constraints are satisfied, 
	$\bE d_1(X_1,\hat{X}_1)\leq D_1$, $\bE d_2(X_2,\hat{X}_2)\leq D_2$, and $\bE d_s'(X_1,\hat{S})\leq D_s$. 
	Let $p(\hat{x}_1,\hat{x}_2,\hat{s}|y)= \sum_{x_1,x_2}p(x_1,x_2|y)p(\hat{x}_1,\hat{x}_2,\hat{s}|x_1,x_2,y)$. 
	Randomly and independently generate $2^{nR}$ sequence triples $(\hat{x}_1^n,\hat{x}_2^n,\hat{s}^n)$ indexed by $m\in[1:2^{nR}]$, each according to $p(\hat{x}_1,\hat{x}_2,\hat{s}|y)$. 
	The whole codebook $\cC$, consisting of these sequence triples, is revealed to both the encoder and decoder. 
	When observing the source messages $(x_1^n,x_2^n,y^n)$, find an index $m$ such that its indexing sequence $(\hat{x}_1^n,\hat{x}_2^n,\hat{s}^n)$ satisfies $(x_1^n,x_2^n,y^n,\hat{x}_1^n,\hat{x}_2^n,\hat{s}^n)\in\cT_{\epsilon}^n$. 
	If there is more than one such index, randomly choose one of them; if there is no such index, set $m=1$. 
	Upon receiving the index $m$, the decoder reconstruct the messages and inference by choosing the codeword $(\hat{x}_1^n,\hat{x}_2^n,\hat{s}^n)$ indexed by $m$. 
	By law of large numbers, the source sequences are joint typical with probability 1 as $n\rightarrow\infty$. 
	Then we define the ``encoding error" event as 
	\begin{equation}
	\mathcal{E}=\left\{\left(X_1^n,X_2^n,Y^n,\hat{X}_1^n,\hat{X}_2^n,\hat{S}^n\right)\notin\cT_{\epsilon}^n, ~\forall m\in\left[1:2^{nR}\right]\right\}. 
	\end{equation}
	Then we can bound the error probability as follows 
	\begin{align*}
	&P(\mathcal{E}) \nonumber \\
	&= P\left\{\left(x_1^n,x_2^n,y^n,\hat{X}_1^n,\hat{X}_2^n,\hat{S}^n\right)\notin\cT_{\epsilon}^n, ~\forall m\in\left[1:2^{nR}\right]\right\}  \\
	&=\prod_{m=1}^{2^{nR}}P\left\{\left(x_1^n,x_2^n,y^n,\hat{X}_1^n,\hat{X}_2^n,\hat{S}^n\right)\notin\cT_{\epsilon}^n\right\}  \\
	&=\left(1-P\left\{\left(x_1^n,x_2^n,y^n,\hat{X}_1^n,\hat{X}_2^n,\hat{S}^n\right)\in\cT_{\epsilon}^n\right\}\right)^{2^{nR}}  \\
	&\leq \sum_{(x_1^n,x_2^n,y^n)\in\cT_{\epsilon}^n} \Bigg[p(x_1^n,x_2^n,y^n)\cdot \nonumber \\ 
	&\qquad \left(1-2^{-n[I(X_1,X_2;\hat{X}_1,\hat{X}_2,\hat{S}|Y)+\delta(\epsilon)]}\right)^{2^{nR}}\Bigg]  \\
	&\leq \exp\left(-2^{n[R-I(X_1,X_2;\hat{X}_1,\hat{X}_2,\hat{S}|Y)-\delta(\epsilon)]}\right), 
	\end{align*}
	where $\delta(\epsilon)\rightarrow0$ as $n\rightarrow\infty$, 
	the first inequality follows from the joint typicality lemma in \cite{NetworkIT-book}, 
	and the last inequality follows from the fact that $(1-z)^t\leq \exp(-tz)$ for $z\in[0,1]$ and $t\geq 0$. 
	We see that $P(\mathcal{E})\rightarrow0$ as $n\rightarrow\infty$ if $R> I(X_1,X_2;\hat{X}_1,\hat{X}_2,\hat{S}|Y)+\delta(\epsilon)$. 
	If the error event does not happen, i.e., the reconstruction is joint typical with the source sequences, 
	then from the distortion constraints assumed for the conditional pmf, the expected distortions can achieve $D_1,D_2,$ and $D_s$, respectively. 
	This proves the achievability. 
	
	Define $R_I(D_1,D_2,D_s)$ as the rate-distortion function characterized by \Cref{thm:main}, 
	For the converse part, we show that 
	\begin{align}
	nR&\geq H(W) \geq H(W|Y^n) \geq I(X_1^n,X_2^n;W|Y^n)  \nonumber \\
	&\geq I(X_1^n,X_2^n;\hat{X}_1^n,\hat{X}_2^n,\hat{S}^n|Y^n)  \nonumber \\
	&= I(X_1^n,X_2^n,Y^n;\hat{X}_1^n,\hat{X}_2^n,\hat{S}^n) - I(Y^n;\hat{X}_1^n,\hat{X}_2^n,\hat{S}^n) \nonumber \\
	&=\sum_{i=1}^n \left[I(X_{1,i},X_{2,i},Y_i;\hat{X}_1^n,\hat{X}_2^n,\hat{S}^n|X_{1,1}^{i-1},X_{2,1}^{i-1},Y_1^{i-1})\right. \nonumber \\
	&\qquad \left.- I(Y_i;\hat{X}_1^n,\hat{X}_2^n,\hat{S}^n|Y_1^{i-1})\right]   \nonumber \\
	&=\sum_{i=1}^n \left[I(X_{1,i},X_{2,i},Y_i;\hat{X}_1^n,\hat{X}_2^n,\hat{S}^n,X_{1,1}^{i-1},X_{2,1}^{i-1},Y_1^{i-1}) \right. \nonumber \\
	&\qquad \left.- I(Y_i;\hat{X}_1^n,\hat{X}_2^n,\hat{S}^n,Y_1^{i-1})\right]   \nonumber \\
	&=\sum_{i=1}^n \left[I(X_{1,i},X_{2,i};\hat{X}_{1,i},\hat{X}_{2,i},\hat{S}_{i}|Y_i) \right. \nonumber \\
	&\qquad  + I(X_{1,i},X_{2,i},Y_i;\hat{X}_{1,1}^{i-1},\hat{X}_{1,i+1}^{n},\hat{X}_{2,1}^{i-1},\hat{X}_{2,i+1}^{n},  \nonumber \\
	&\qquad \qquad \hat{S}_{1}^{i-1},\hat{S}_{i+1}^{n},X_{1,1}^{i-1},X_{2,1}^{i-1},Y_1^{i-1}|\hat{X}_{1,i},\hat{X}_{2,i},\hat{S}_{i})  \nonumber  \\
	&\qquad  - I(Y_i;\hat{X}_{1,1}^{i-1},\hat{X}_{1,i+1}^{n},\hat{X}_{2,1}^{i-1},\hat{X}_{2,i+1}^{n},  \nonumber \\
	&\qquad \qquad \left.\hat{S}_{1}^{i-1},\hat{S}_{i+1}^{n},Y_1^{i-1}|\hat{X}_{1,i},\hat{X}_{2,i},\hat{S}_{i})\right]   \nonumber \\
	&=\sum_{i=1}^n \left[I(X_{1,i},X_{2,i};\hat{X}_{1,i},\hat{X}_{2,i},\hat{S}_{i}|Y_i) \right. \nonumber \\
	&\qquad + I(X_{1,i},X_{2,i},Y_i;X_{1,1}^{i-1},X_{2,1}^{i-1}|\hat{X}_1^n,\hat{X}_2^n,\hat{S}^n,Y_1^{i})  \nonumber \\
	&\qquad + I(X_{1,i},X_{2,i};\hat{X}_{1,1}^{i-1},\hat{X}_{1,i+1}^{n},\hat{X}_{2,1}^{i-1},\hat{X}_{2,i+1}^{n},  \nonumber \\
	&\qquad \qquad \left. \hat{S}_{1}^{i-1},\hat{S}_{i+1}^{n},Y_1^{i-1}|\hat{X}_{1,i},\hat{X}_{2,i},\hat{S}_{i},Y_i)\right]  \nonumber \\
	&\geq \sum_{i=1}^n I(X_{1,i},X_{2,i};\hat{X}_{1,i},\hat{X}_{2,i},\hat{S}_{i}|Y_i)  \label{converse-3} \\
	&\geq \sum_{i=1}^n R_I\left(\bE d_1(X_{1,i},\hat{X}_{1,i}), \bE d_2(X_{2,i},\hat{X}_{2,i}), \bE d_s'(X_{1,i},\hat{S}_i)\right)  \label{converse-4} \\
	&\geq n R_I\left(\bE d_1(X_1^n,\hat{X}_1^n), \bE d_2(X_2^n,\hat{X}_2^n), \bE d_s'(X_1^n,\hat{S}^n)\right) \label{converse-5} \\
	&\geq n R_I\left(\bE d_1(X_1^n,\hat{X}_1^n), \bE d_2(X_2^n,\hat{X}_2^n), \bE d_s(S^n,\hat{S}^n)\right) \label{converse-6} \\
	&\geq nR_I(D_1,D_2,D_s), 
	\end{align}
	where \eqref{converse-3} follows from the nonnegativity of mutual information, 
	\eqref{converse-4} follows from the definition of $R_I(D_1,D_2,D_s)$, 
	\eqref{converse-5} follows from the convexity of $R_I(D_1,D_2,D_s)$, 
	\eqref{converse-6} follows from $\bE d_s'(X_1^n,\hat{S}^n)=\bE d_s(S^n,\hat{S}^n)$ which is proved in \cite{Liu-Zhang-Poor-ISIT21}, 
	and the last inequality follows from the non-increasing property of $R_I(D_1,D_2,D_s)$. 
	This completes the converse proof.

	\section{Proof of \Cref{lemma:opt-separate-compression}}\label{proof-lemma:opt-separate-compression}
	The Markov chain $X_1-Y-X_2$ indicates that 
	\begin{equation}
	H(X_2|X_1,Y)=H(X_2|Y).  \label{sep_prf_indep-X2-X1Y}
	\end{equation}
	Then the mutual information in \eqref{RD-fn_rate} can be bounded by 
	\begin{align*}
	&I(X_1,X_2;\hat{X}_1,\hat{X}_2,\hat{S}|Y)  \nonumber \\
	&= H(X_1,X_2|Y)-H(X_1,X_2|\hat{X}_1,\hat{X}_2,\hat{S},Y)  \\
	&= H(X_1|Y)+H(X_2|Y)-H(X_1|\hat{X}_1,\hat{X}_2,\hat{S},Y) \nonumber \\
	&\qquad -H(X_2|X_1,\hat{X}_1,\hat{X}_2,\hat{S},Y)  \\
	&\geq H(X_1|Y)+H(X_2|Y) \! - \! H(X_1|\hat{X}_1,\hat{S},Y) \! -\! H(X_2|\hat{X}_2,Y)  \\
	&= I(X_1;\hat{X}_1,\hat{S}|Y) + I(X_2;\hat{X}_2|Y),
	\end{align*}
	where the inequality follows from the fact that conditioning does not increase entropy. 
	Now, we have 
	\begin{align*}
	&R(D_1,D_2,D_s)  \nonumber \\
	&= \min_{\substack{p(\hat{x}_1,\hat{x}_2,\hat{s}|x_1,x_2,y)\\ \bE d_1(X_1,\hat{X}_1)\leq D_1 \\ \bE d_2(X_2,\hat{X}_2)\leq D_2 \\ \bE d_s'(X_1,\hat{S})\leq D_s}} I(X_1,X_2;\hat{X}_1,\hat{X}_2,\hat{S}|Y)  \\
	&\geq \min_{\substack{p(\hat{x}_1,\hat{x}_2,\hat{s}|x_1,x_2,y)\\ \bE d_1(X_1,\hat{X}_1)\leq D_1 \\ \bE d_2(X_2,\hat{X}_2)\leq D_2 \\ \bE d_s'(X_1,\hat{S})\leq D_s}} \left[I(X_1;\hat{X}_1,\hat{S}|Y) + I(X_2;\hat{X}_2|Y)\right]  \\
	&= \min_{\substack{p(\hat{x}_1,\hat{s}|x_1,y)\\ \bE d_1(X_1,\hat{X}_1)\leq D_1 \\ \bE d_s'(X_1,\hat{S})\leq D_s}} I(X_1;\hat{X}_1,\hat{S}|Y) \nonumber \\
	&\qquad + \min_{\substack{p(\hat{x}_2|x_2,y)\\ \bE d_2(X_2,\hat{X}_2)\leq D_2}} I(X_2;\hat{X}_2|Y)  \\
	&= R_{\text{2d},X_1|Y}(D_1,D_s) + R_{X_2|Y}(D_2). 
	\end{align*}
	For the other direction, we show that the rate-distortion quadruple $\big(R_{\text{2d},X_1|Y}(D_1,D_s) + R_{X_2|Y}(D_2), D_1,D_2,D_s\big)$ is achievable. 
	To see this, let $p^*(\hat{x}_1,\hat{s}|x_1,y)$ and $p^*(\hat{x}_2|x_2,y)$ be the optimal distributions 
	that achieve the rate-distortion tuples $\big(R_{\text{2d},X_1|Y}(D_1,D_s), D_1,D_s\big)$ and $\big(R_{X_2|Y}(D_2), D_2\big)$, respectively. 
	Now we consider the distribution $p^*(x_1,x_2,\hat{x}_1,\hat{x}_2,\hat{s}|y)\triangleq p^*(x_1,\hat{x}_1,\hat{s}|y) p^*(x_2,\hat{x}_2|y)$ 
	which requires the Markov chain $(X_1,\hat{X}_1,\hat{S})-Y-(X_2,\hat{X}_2)$ and is consistent with the condition $X_1-Y-X_2$. 
	Then the corresponding random variables satisfy 
	\begin{align*}
	&I(X_1,X_2;\hat{X}_1,\hat{X}_2,\hat{S}|Y)  \nonumber \\
	&= H(X_1|Y)+H(X_2|Y)-H(X_1|\hat{X}_1,\hat{X}_2,\hat{S},Y)  \nonumber \\
	&-H(X_2|X_1,\hat{X}_1,\hat{X}_2,\hat{S},Y)  \\
	&= H(X_1|Y)+H(X_2|Y) \!-\! H(X_1|\hat{X}_1,\hat{S},Y) \!-\! H(X_2|\hat{X}_2,Y)  \\
	&= I(X_1;\hat{X}_1,\hat{S}|Y) + I(X_2;\hat{X}_2|Y), \\
	&= R_{\text{2d},X_1|Y}(D_1,D_s) + R_{X_2|Y}(D_2), 
	\end{align*}
	where the first equality follows from \eqref{sep_prf_indep-X2-X1Y}, the second equality follows from the Markov chain $(X_1,\hat{X}_1,\hat{S})-Y-(X_2,\hat{X}_2)$, 
	and the last equality follows from the optimality of $p^*(\hat{x}_1,\hat{s}|x_1,y)$ and $p^*(\hat{x}_2|x_2,y)$. 
	Lastly, by the minimization in the expression of the rate-distortion function in \eqref{RD-fn_rate}, we conclude that $R(D_1,D_2,D_s)\leq R_{\text{2d},X_1|Y}(D_1,D_s) + R_{X_2|Y}(D_2)$, which completes the proof of the lemma.

	\section{Proof of \Cref{lemma:RD-semantic}}\label{proof-lemma:RD-semantic}	
	As we are in the binary Hamming setting, we first calculate the values of $d_s'(x_1,\hat{s})$ (c.f. \eqref{distortion-s-alt}) by 
	\begin{align*}
	d_s'(0,0)&= \frac{1}{p(x_1=0)}\sum_{s=0,1}p(x_1=0, s)d_s(s,0) \\
	&= \frac{1}{p(x_1=0)}\big[p(x_1=0, s=0)d_s(0,0) \nonumber \\
	&\qquad + p(x_1=0, s=1)d_s(1,0)\big] \\
	&= \frac{p(x_1=0, s=1)}{p(x_1=0)}  \\
	&= p(s=1|x_1=0)  \\
	&= p.
	\end{align*}
	The other values follow similarly, and we obtain the distortion function 
	\begin{equation}
	d_s'(x_1,\hat{s})=
	\begin{cases}
	p, 	 &\text{if }\hat{s}=x_1 \\
	1-p, &\text{if }\hat{s}\neq x_1.
	\end{cases}  \label{distortion_X-S-hat}
	\end{equation}
	Then 
	\begin{align*}
	\bE d_s'(X_1,\hat{S})&= \sum_{x_1,\hat{s}}p(x_1,\hat{s})d_s'(x_1,\hat{s})  \\
	&= P(X_1\neq \hat{S}) \cdot (1-p) + P(X_1=\hat{S})\cdot p \\
	&=P(X_1\neq \hat{S})\big[1-2p\big]+p. 
	\end{align*}
	The distortion constraint $\bE d_s'(X_1,\hat{S})\leq D_s$ for $D_s\geq p$ implies $P(X_1\neq \hat{S})\leq \frac{D_s-p}{1-2p}$. 
	Now we can follow the rate-distortion evaluation for Bernoulli source and Hamming distortion in \cite{Thomas-Cover-book,Raymond-book} while only changing the probability $P(X_1\neq \hat{S})$ and obtain 
	\begin{equation}
	R_s(D_s)=\left[1-h_b\left(\frac{D_s-p}{1-2p}\right)\right]\cdot \mathds{1}_{_{p\leq D_s\leq 0.5}}.  \label{RD-fn-d_s}
	\end{equation}
	This proves the lemma. 
	

	\section{Proof of \Cref{thm:correlated}}\label{proof-thm:correlated}
	Note that separately compressing correlated sources is not optimal in general, i.e., the statement in \Cref{lemma:opt-separate-compression} does not hold here. 
	Then we need to evaluate the mutual information in \eqref{RD-fn_rate} over joint distributions, and we have 
	\begin{align}
	&I(X_1,X_2;\hat{X}_1,\hat{X}_2,\hat{S}|Y)  \nonumber \\
	&= H(X_1,X_2|Y) - H(X_1,X_2|\hat{X}_1,\hat{X}_2,\hat{S},Y)  \nonumber \\
	&= H(X_1)+H(X_2|X_1)+H(Y|X_1)-H(Y) \nonumber \\
	&- H(X_1,X_2|\hat{X}_1,\hat{X}_2,\hat{S},Y)\nonumber  \\
	&= h_b(p_1)+h_b(p_2) - H(X_1\oplus \hat{X}_1, X_2\oplus \hat{X}_2|\hat{X}_1,\hat{X}_2,\hat{S},Y)  \label{Correlated-LB_I-2} \\
	&\geq h_b(p_1)+h_b(p_2) - H(X_1\oplus \hat{X}_1, X_2\oplus \hat{X}_2) \nonumber \\
	&\geq h_b(p_1)+h_b(p_2) - H(X_1\oplus \hat{X}_1)-H(X_2\oplus \hat{X}_2) \nonumber \\
	&\geq h_b(p_1)+h_b(p_2) - h_b(D_1)-h_b(D_2), \nonumber
	\end{align}
	where $\oplus$ denote modulo 2 addition, 
	the second equality follows from the Markov chain $Y-X_1-X_2$, 
	the first inequality follows from the fact that conditioning does not increase entropy, 
	and the last inequality follows from $P(X_1\oplus \hat{X}_1=1)=P(X_1\neq \hat{X}_1)\leq D_1$, $P(X_2\oplus \hat{X}_2=1)=P(X_2\neq \hat{X}_2)\leq D_2$, and $h_b(D)$ is increasing in $D$ for $0\leq D\leq 0.5$. 
	By switching the roles of $\hat{X}_1$ and $\hat{S}$, i.e., replacing the conditional information in \eqref{Correlated-LB_I-2} by $H(X_1\oplus \hat{S}, X_2\oplus \hat{X}_2|\hat{X}_1,\hat{X}_2,\hat{S},Y)$, we can obtain similarly 
	\begin{equation*}
	I(X_1,X_2;\hat{X}_1,\hat{X}_2,\hat{S}|Y) \geq h_b(p_1)+h_b(p_2) - h_b\left(D_s^0\right)-h_b(D_2). 
	\end{equation*}
	Thus, the rate-distortion function is lower bounded by 
	\begin{equation}
	R(D_1,D_2,D_s)\geq h_b(p_1)+h_b(p_2) - h_b(\min\{D_1,D_s^0\})-h_b(D_2). \label{Correlated-lower_bound}
	\end{equation}
	We now show that the lower bound is tight by finding a joint distribution $p(x_1,x_2,y,\hat{x}_1,\hat{x}_2,\hat{s})$ that meets the distortion constraints and achieves the above lower bound. 
	\begin{figure}[!t]
		\centering
		\begin{tikzpicture}
		\node at (0.3,2) {$\hat{Z}_1\hat{Z}_2$};
		\node at (0.3,1.5) {$00$};
		\node at (0.3,0.5) {$01$};
		\node at (0.3,-0.5) {$10$};
		\node at (0.3,-1.5) {$11$};
		
		\node at (-0.5,1.5) {$q_1$};
		\node at (-0.5,0.5) {$q_2$};
		\node at (-0.5,-0.5) {$q_3$};
		\node at (-0.5,-1.5) {$q_4$};
		
		\draw (0.5,1.5)--(2.5,1.5)--(0.5,0.5)--(2.5,0.5)--(0.5,1.5)--(2.5,-0.5)--(0.5,-0.5)--(2.5,1.5)--(0.5,-1.5)--(2.5,-1.5)--(0.5,1.5)--(2.5,-0.5)--(0.5,0.5)--(2.5,-1.5)--(0.5,-0.5)--(2.5,0.5)--(0.5,-1.5)--(2.5,-0.5)--cycle;
		
		\node at (2.7,2) {$Z_1Z_2$};
		\node at (2.7,1.5) {$00$};
		\node at (2.7,0.5) {$01$};
		\node at (2.7,-0.5) {$10$};
		\node at (2.7,-1.5) {$11$};
		
		\node [right] at (3.0,1.5) {$(1-p_1)(1-p_2)$};
		\node [right] at (3.4,0.5) {$p_1(1-p_2)$};
		\node [right] at (3.6,-0.5) {$p_1p_2$};
		\node [right] at (3.2,-1.5) {$(1-p_1)p_2$};
		
		%
		\end{tikzpicture}
		\caption{Test channel from $\hat{Z}_1\hat{Z}_2$ to $Z_1Z_2$: 
			$Z_1\sim$Ber($p_1$), $Z_2\sim$Ber($p_2$), $Z_1,Z_2$ are mutually independent, $\hat{Z}_1\hat{Z}_2\sim(q_1,q_2,q_3,q_4)$, and the transition probability $p(z_1,z_2|\hat{z}_1,\hat{z}_2)$ is given by \eqref{Correlated-TestChannel_prob}.}
		\label{fig_TestChannel4to4}
	\end{figure}
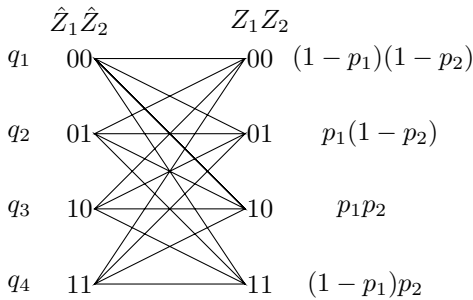
	Let $Z_i=Y\oplus X_i$ and $\hat{Z}_i=Y\oplus \hat{X}_i$ for $i=1,2$. 
	Then there is a one-to-one correspondence between $p(x_1,x_2,y,\hat{x}_1,\hat{x}_2,\hat{s})$ and $p(z_1,z_2,y,\hat{z}_1,\hat{z}_2,\hat{s})$. 
	From $Z_1,Z_2$ generated from the source distribution $p(x_1,x_2,y)$, it is easy to check that $(Z_1,Z_2)\sim[(1-p_1)(1-p_2),p_1(1-p_2),p_1p_2,(1-p_1)p_2]$ as shown in Fig.~\ref{fig_TestChannel4to4}. 
	
	Next, we construct the desired joint distribution using the test channel in Fig.~\ref{fig_TestChannel4to4} as follows. 
	For $p_1,p_2\in[0,0.5]$, $(D_1,D_2,D_s)\in\cD_0$ (c.f. \eqref{Correlated-distortion_region}), and $D_1\leq D_s^0$, 
	consider the joint distribution $p(x_1,x_2,y,\hat{x}_1,\hat{x}_2,\hat{s})$ defined by the following conditions 
	\begin{enumerate}[i.]
		\item $\hat{S}=\hat{X}_1$; 
		
		\item Markov chain $Y-(\hat{Z}_1\hat{Z}_2)-(Z_1Z_2)$; 
		
		\item The test channel in Fig.~\ref{fig_TestChannel4to4}  with the conditional probability $p(z_1,z_2|\hat{z}_1,\hat{z}_2)$ given as 
		\begin{align}
		&p(z_1,z_2|\hat{z}_1,\hat{z}_2)=  \nonumber \\
		&\begin{cases}
		(1-D_1)(1-D_2), &\text{ if } z_1=\hat{z}_1 \text{ and } z_2=\hat{z}_2  \\ 
		(1-D_1)D_2, &\text{ if } z_1=\hat{z}_1 \text{ and } z_2\neq \hat{z}_2  \\ 
		D_1(1-D_2), &\text{ if } z_1\neq \hat{z}_1 \text{ and } z_2=\hat{z}_2  \\ 
		D_1D_2, &\text{ if } z_1\neq \hat{z}_1 \text{ and } z_2\neq \hat{z}_2;  \\ 
		\end{cases}  \label{Correlated-TestChannel_prob}
		\end{align}
		
		\item In order for $Z_1,Z_2$ to follow the independent Bernoulli distributions, we need to choose the distribution of $\hat{Z}_1\hat{Z}_2$ as \eqref{distribution-Zs-1}-\eqref{distribution-Zs-4}. 
		\begin{figure*}
			\begin{align}
			q_1 &= \frac{[(1-p_2)-D_1][(1-p_1-p_2+2p_1p_2)-D_2]+p_2(1-2p_1)(1-p_2)}{(1-2D_1)(1-2D_2)}  \label{distribution-Zs-1}\\
			q_2 &= \frac{[(1-p_2)-D_1][(p_1+p_2-2p_1p_2)-D_2]+p_2(2p_1-1)(1-p_2)}{(1-2D_1)(1-2D_2)} \\
			q_3 &= \frac{(p_2-D_1)[(1-p_1-p_2+2p_1p_2)-D_2]+p_2(2p_1-1)(1-p_2)}{(1-2D_1)(1-2D_2)} \\
			q_4 &= \frac{(p_2-D_1)[(p_1+p_2-2p_1p_2)-D_2]+p_2(1-2p_1)(1-p_2)}{(1-2D_1)(1-2D_2)} \label{distribution-Zs-4}
			\end{align}
		\end{figure*}
		We can verify that $q_i\geq 0,~i=1,2,3,4$ for $(D_1,D_2,D_s)\in\cD_0$. 
	\end{enumerate}
	Now it remains to verify that the above distribution achieves the rate in \eqref{Correlated-lower_bound} and distortions $D_1,D_2$, and $D_s$. 
	From conditions i and ii, we have 
	\begin{align*}
	&I(X_1,X_2;\hat{X}_1,\hat{X}_2,\hat{S}|Y)  \nonumber \\
	&= H(X_1,X_2|Y) - H(X_1,X_2|\hat{X}_1,\hat{X}_2,Y)  \\
	&= H(X_1,X_2|Y) - H(X_1,X_2|\hat{X}_1,\hat{X}_2)  \\
	&= h_b(p_1)+h_b(p_2) - h_b(D_1)-h_b(D_2). 
	\end{align*}
	From \eqref{Correlated-TestChannel_prob} and \eqref{distortion_X-S-hat}, it is easy to calculate the expected distortions as  
	\begin{align}
	\bE d_1(X_1,\hat{X}_1)&= D_1 \\
	\bE d_2(X_2,\hat{X}_2)&= D_2 \\
	\bE d_s'(X_1,\hat{S})&=(1-D_1)p+D_1(1-p) \leq D_s. 
	\end{align}
	
	On the other hand, if $D_1\geq D_s^0$, we can construct the joint distribution $p(x_1,x_2,y,\hat{x}_1,\hat{x}_2,\hat{s})$ using four conditions similarly 
	by switching the role of $(\hat{X}_1,D_1)$ and $(\hat{S},D_s^0)$. 
	Then the rate and distortions can be obtained accordingly. 
	This proves the theorem. 

	\section{Proof of \Cref{thm:classify}}\label{proof-thm:classify}
	Since \Cref{lemma:opt-separate-compression} holds here, we first calculate the rate-distortion function for $X_2$, which is 
	\begin{align}
	R_{X_2|Y}(D_2) &= \big[1-h_b(D_2)\big]\cdot \mathds{1}_{_{0\leq D_2\leq 0.5}}. 
	\end{align}
	Now it remains to calculate $R_{\text{2d},X_1|Y}(D_1,D_s)$. 
	We first consider the case that $D_1\leq D_s^0$ and provide a lower bound of the mutual information as follows
	\begin{align*}
	&I(X_1;\hat{X_1},\hat{S}|Y)  \nonumber \\
	&\geq I(X_1;\hat{X_1}|Y)   \\
	&= H(X_1|Y) - H(X_1|\hat{X}_1,Y)  \\
	&\geq H(X_1|Y) - H(X_1|\hat{X}_1)  \\
	&= h_b(p_2)+\log(N/2) - H(X_1|\hat{X}_1)  \\
	&\geq h_b(p_2)+\log(N/2) - h_b(D_1)-D_1\log(N-1), 
	\end{align*}
	where the last inequality follows from $P(\hat{X}_1\neq X_1)\leq D_1$, $h_b(x)$ is an increasing function for $x\in[0,0.5]$, and the fact that uniform distribution maximizes entropy. 
	(Note that we can also obtain the above lower bound by directly applying the log-sum inequality.) 
	
	Next, we show the lower bound is tight by finding a joint distribution that achieves the above rate and distortions $D_1$ and $D_s$. 
	For $0\leq D_1\leq \frac{2(N-1)p_2}{N}$, we choose $\hat{S}=\hat{X}_1$ and $(X_1,Y,\hat{X}_1)$ by the test channel $p(x_1|\hat{x}_1)$ in Fig.~\ref{fig_TestChannelNtoN} and the Markov chain $Y-\hat{X}_1-X_1$. 
	The conditional probability of the test channel in Fig.~\ref{fig_TestChannelNtoN} is given as 
	\begin{align}
	p(\hat{x}_1|x)=
	\begin{cases}
	1-D_1, &\text{ if }\hat{x}_1=x  \\
	\frac{D_1}{N-1}, &\text{ if }\hat{x}_1\neq x. 
	\end{cases} \label{Classify-TestChannel_prob}
	\end{align}
	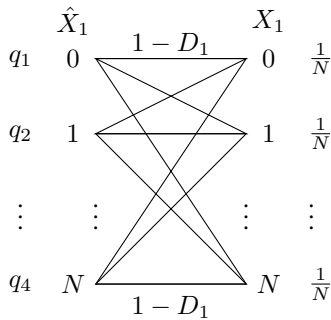
\begin{figure}[!t]
		\centering
		\begin{tikzpicture}
		\node at (1.5,1.7) {$1-D_1$};
		\node at (1.5,-1.8) {$1-D_1$};
		\node at (0.2,2) {$\hat{X}_1$};
		\node at (0.2,1.5) {$0$};
		\node at (0.2,0.5) {$1$};
		\node at (0.5,-0.5) {$\vdots$};
		\node at (0.2,-1.5) {$N$};
		
		\node at (-0.5,1.5) {$q_1$};
		\node at (-0.5,0.5) {$q_2$};
		\node at (-0.5,-0.5) {$\vdots$};
		\node at (-0.5,-1.5) {$q_4$};
		
		\draw (0.5,1.5)--(2.5,1.5)--(0.5,0.5)--(2.5,0.5)--cycle; 
		\draw (0.5,1.5)--(2.5,-1.5)--(0.5,-1.5)--(2.5,1.5)--cycle; 
		\draw (0.5,0.5)--(2.5,0.5)--(0.5,-1.5)--(2.5,-1.5)--cycle;
		
		\node at (2.8,2) {$X_1$};
		\node at (2.8,1.5) {$0$};
		\node at (2.8,0.5) {$1$};
		\node at (2.5,-0.5) {$\vdots$};
		\node at (2.8,-1.5) {$N$};
		
		\node [right] at (3.2,1.5) {$\frac{1}{N}$};
		\node [right] at (3.2,0.5) {$\frac{1}{N}$};
		\node [right] at (3.2,-0.5) {$\vdots$};
		\node [right] at (3.2,-1.5) {$\frac{1}{N}$};
		
		\end{tikzpicture}
		\caption{Test channel from $\hat{X}_1$ to $X_1$: 
			$X_1\sim$Uniform($\frac{1}{N}$) and the transition probability $p(x_1|\hat{x}_1)$ is given by \eqref{Classify-TestChannel_prob}.}
		\label{fig_TestChannelNtoN}
	\end{figure}
	%
	Solving the equations 
	\begin{align*}
	q_i(1-D_1)+\left(\sum_{j\in[1:N],j\neq i}q_j\right)\frac{D_1}{N-1} = \frac{1}{N}, ~i\in[1:N], 
	\end{align*}
	we obtain that $q_i=\frac{1}{N}$ for $i\in[1:N]$, i.e., $\hat{X}_1$ is also uniformly distributed over $[1:N]$. 
	For the joint distribution of $Y$ and $\hat{X}_1$, we define 
	\begin{align*}
	p(y|\hat{x}_1)=
	\begin{cases}
	\dfrac{p_2-\frac{ND_1}{2(N-1)}}{1-\frac{ND_1}{N-1}}, &\text{if }y=0,\hat{x}_1\text{ is odd}  \\ 
	~ & \text{~~or }y=1,\hat{x}_1\text{ is even}  \\ 
	~  \\
	\dfrac{1-p_2-\frac{ND_1}{2(N-1)}}{1-\frac{ND_1}{N-1}}, &\text{if }y=0,\hat{x}_1\text{ is even} \\
	~ & \text{~~or }y=1,\hat{x}_1\text{ is odd}.
	\end{cases}
	\end{align*}
	We see that $p(y|\hat{x}_1)\geq 0$ for any $D_1\leq \frac{2(N-1)p_2}{N}$, i.e., $(D_1,D_2,D_s)\in\cD_1$. 
	Then we can verify using $p(y|x_1)=\sum_{\hat{x}_1}p(y|x_1,\hat{x}_1)p(\hat{x}|x)=\sum_{\hat{x}_1}p(y|\hat{x}_1)p(\hat{x}|x)$ that the above distribution 
	can induce the same conditional probability $p(y|x_1)=p_2,1-p_2$ as defined at the beginning of this section. 
	Thus, we have constructed a feasible $p(x_1,y,\hat{x}_1)$ that can achieve expected distortion $\bE d_1(X,\hat{X}_1) = D_1$, $D_s^0\geq D_1$, and mutual information 
	\begin{align*}
	&I(X_1;\hat{X_1},\hat{S}|Y)  \nonumber \\
	&= I(X_1;\hat{X_1}|Y)   \\
	&= H(X_1|Y) - H(X_1|\hat{X}_1,Y)  \\
	&= H(X_1|Y) - H(X_1|\hat{X}_1)  \\
	&= h_b(p_2)+\log(N/2) - h_b(D_1)-D_1\log(N-1), 
	\end{align*}
	where the first equality follows from $\hat{S}=\hat{X}_1$, 
	the third equality follows from the Markov chain $Y-\hat{X}_1-X_1$, 
	and the last equality follows from the joint distribution of $(X_1,Y)$ and the distribution in \eqref{Classify-TestChannel_prob}. 
	
	For the other case that $D_1\geq D_s^0$, we only need to switch the role of $(\hat{X}_1,D_1)$ and $(\hat{S},D_s^0)$. 
	Then the rate and distortions can be obtained accordingly, which can prove the theorem.

	\section{Proof of \Cref{thm:Gaussian}}\label{proof-thm:Gaussian}
	The rate-distortion function in Theorem 1 satisfies
	\begin{equation}
	R(D_1,D_2,D_s)= R_{2d,X_1|Y}(D_1,D_s)+R_{X_2|Y}(D_2).
	\end{equation}
	The second term is the solution to the quadratic Gaussian source coding problem with side information \cite[Chapter 11]{NetworkIT-book}, given as 
	\begin{equation}\label{eqn:X2 rate}
	R_{X_2|Y}(D_2)=\frac{1}{2}\left(\log\frac{\sigma_{X_2|Y}}{D_2}\right)^+,
	\end{equation}
	where $\sigma_{X_2|Y}$ is the conditional variance of $X_2$ given $Y$. 
	Note that $X_2$ is Gaussian conditioning on $Y$, i.e.,  
	$X_2|Y\sim\mathcal{N}(\frac{\sigma_{X_2Y}}{\sigma_{Y}}Y,\sigma_{X_2}-\frac{\sigma_{X_2Y}^2}{\sigma_{Y}})$, 
	which implies $\sigma_{X_2|Y}=\sigma_{X_2}-\frac{\sigma_{X_2Y}^2}{\sigma_{Y}}$.
	
	For the first term, note that $R_{2d,X_1|Y}(D_1,D_s)$ is lowered bounded by both 
	\begin{equation}\label{eqn:X1 rate}
	R_{X_1|Y}(D_1)=\min_{\mathbb{E}d_1\left(X_1,\hat{X}_1\right)\le D_1} I(X_1;\hat{X}_1|Y),
	\end{equation}
	and
	\begin{equation}\label{eqn:indirect rate}
	R_{S|Y}(D_s)=\min_{\mathbb{E}d_S^\prime\left(X_1,\hat{S}\right)\le D_s} I(X_1;\hat{S}|Y).
	\end{equation}
	Obviously, (\ref{eqn:X1 rate}) is the solution of the quadratic Gaussian source coding with side information similarly to (\ref{eqn:X2 rate}).
	
	We proceed to calculate (\ref{eqn:indirect rate}), which is actually the semantic rate-distortion function of the indirect source coding with side information. Observing 
	$\left(X_1,Y\right)$, $S$ is conditionally Gaussian as $S|\left(X_1,Y\right)\sim \mathcal{N}\left(\frac{\sigma_{SX_1}}{\sigma_{X_1}}X_1,\sigma_S-\frac{\sigma_{SX_1}^2}{\sigma_{X_1}}\right)$.
	It is shown in~\cite{Wolf-Ziv-IT70-Transmission} that we can rewrite the semantic distortion as 
	\begin{equation}\label{eqn:EC distortion}
	\mathbb{E}d_S^\prime\left(X_1,\hat{S}\right)=\mathbb{E}\left[\left(S-\tilde{S}_\mathrm{MMSE}\right)^2\right]+\mathbb{E}\left[\left(\tilde{S}_\mathrm{MMSE}-\hat{S}\right)^2\right],
	\end{equation}
	where $\tilde{S}_\mathrm{MMSE}=\frac{\sigma_{SX_1}}{\sigma_{X_1}}X_1$ is the MMSE estimator upon observing $X_1$ and $Y$, 
	and the first term on the right-hand side is the corresponding minimum mean squared error ($\mathrm{mmse}$ c.f. \eqref{eqn:mmse_def}), i.e., 
	\begin{equation}
	\mathbb{E}\left[\left(S-\tilde{S}_\mathrm{MMSE}\right)^2\right] =\mathrm{mmse} =\sigma_S-\frac{\sigma_{SX_1}^2}{\sigma_{X_1}}.
	\end{equation}
	Then we consider a specific encoder which first estimates the semantic information using MMSE estimator 
	and then compresses the estimation under mean squared error distortion constraint $D_s-\mathrm{mmse}$ with side information. 
	The resulting achievable rate provides an upper bound on $R_{S|Y}(D_s)$ which is
	\begin{align}\label{eqn:UB}
	R_{S|Y}(D_s)&\le \frac{1}{2}\left(\log\frac{\sigma_{\tilde{S}_\mathrm{MMSE}|Y}}{D_s-\mathrm{mmse}}\right)^+  \nonumber \\
	&=\frac{1}{2}\left(\log\frac{\sigma_{SX_1}^2\sigma_{X_1|Y}}{\sigma_{X_1}^2\left(D_s-\mathrm{mmse}\right)}\right)^+  \nonumber \\
	&=\frac{1}{2}\left(\log\frac{\sigma_{SX_1}^2\left(\sigma_{X_1}-\frac{\sigma_{X_1Y}^2}{\sigma_{Y}}\right)}{\sigma_{X_1}^2\left(D_s-\mathrm{mmse}\right)}\right)^+.
	\end{align}
	Furthermore, for $D_s\ge \mathrm{mmse}+\frac{\sigma_{SX_1}^2\sigma_{X_1|Y}}{\sigma_{X_1}^2}$, 
	we have $R_{S|Y}(D_s)=0$, which can be obtained by setting 
	$\hat{S}=\mathbb{E}\left[\tilde{S}_\mathrm{MMSE}|Y\right]=\frac{\sigma_{SX_1}}{\sigma_{X_1}}\mathbb{E}\left[X_1|Y\right]$. 
	For $D_s< \mathrm{mmse}+\frac{\sigma_{SX_1}^2\sigma_{X_1|Y}}{\sigma_{X_1}^2}$, we derive a lower bound for $R_{S|Y}(D_s)$ as follows 
	\begin{align}
	&R_{S|Y}(D_s)\ge I(X_1;\hat{S}|Y)=H(X_1|Y)-H(X_1|\hat{S},Y)  \nonumber \\
	&=\frac{1}{2}\log\left(2\pi e \sigma_{X_1|Y}\right)-H(X_1-\frac{\sigma_{X_1}}{\sigma_{SX_1}}\hat{S}|\hat{S},Y) \nonumber \\
	&\ge \frac{1}{2}\log\left(2\pi e \sigma_{X_1|Y}\right)-H(X_1-\frac{\sigma_{X_1}}{\sigma_{SX_1}}\hat{S})  \nonumber \\
	&\ge \frac{1}{2}\log\left(2\pi e \sigma_{X_1|Y}\right)-\frac{1}{2}\log\left(2\pi e \mathbb{E}\left[\left(X_1-\frac{\sigma_{X_1}}{\sigma_{SX_1}}\hat{S}\right)^2\right]\right)   \label{eqn:LB-4}\\
	&\ge \frac{1}{2}\log\left(2\pi e \sigma_{X_1|Y}\right)-\frac{1}{2}\log\left(2\pi e \frac{\sigma_{X_1}^2\left(D_s-\mathrm{mmse}\right)}{\sigma_{SX_1}^2}\right)  \label{eqn:LB-5} \\
	&= \frac{1}{2}\log\frac{\sigma_{SX_1}^2\sigma_{X_1|Y}}{\sigma_{X_1}^2\left(D_s-\mathrm{mmse}\right)}  \nonumber \\
	&= \frac{1}{2}\log\frac{\sigma_{SX_1}^2\left(\sigma_{X_1}-\frac{\sigma_{X_1Y}^2}{\sigma_{Y}}\right)}{\sigma_{X_1}^2\left(D_s-\mathrm{mmse}\right)}  \label{eqn:LB}
	\end{align}
	where \eqref{eqn:LB-4} is due to the fact that the Gaussian distribution maximizes the entropy for a given variance, 
	and \eqref{eqn:LB-5} follows from (\ref{eqn:EC distortion}) and the semantic distortion constraint. 
	Combining the upper and lower bounds in \eqref{eqn:UB} and \eqref{eqn:LB}, we obtain 
	\begin{equation}
	R_{S|Y}(D_s)=\frac{1}{2}\left(\log\frac{\sigma_{SX_1}^2\left(\sigma_{X_1}-\frac{\sigma_{X_1Y}^2}{\sigma_{Y}}\right)}{\sigma_{X_1}^2\left(D_s-\mathrm{mmse}\right)}\right)^+.
	\end{equation}
	Thus we have 
	\begin{align}
	\begin{split}
	&R(D_1,D_2,D_s)\ge  \max\Big\{R_{X_1|Y}(D_1),R_{S|Y}(D_s)\Big\}+R_{X_2|Y}(D_2) \\
	&=\frac{1}{2}\left(\log\frac{\sigma_{X_2}-\frac{\sigma_{X_2Y}^2}{\sigma_{Y}}}{D_2}\right)^+ + \nonumber  \\ 
	&\frac{1}{2}\left[\log\max\left(\frac{\sigma_{X_1}-\frac{\sigma_{X_1Y}^2}{\sigma_{Y}}}{D_1}, \frac{\sigma_{SX_1}^2\left(\sigma_{X_1}-\frac{\sigma_{X_1Y}^2}{\sigma_{Y}}\right)}{\sigma_{X_1}^2\left(D_s-\mathrm{mmse}\right)}\right)\right]^+.
	\end{split}
	\end{align}
	
	To show the achievability, consider the following two cases. 
	\begin{itemize}
		\item For $\frac{D_s-\mathrm{mmse}}{\sigma_{SX_1}^2}\ge\frac{D_1}{\sigma_{X_1}^2}$, 
		we first reconstruct $\hat{X}_1$ and $X_2$ subject to distortion constraints $D_1$ and $D_2$, 
		and hence achieve $R_{X_1|Y}(D_1)+R_{X_2|Y}(D_2)$. 
		Then we recover the semantic information by $\hat{S}=\frac{\sigma_{SX_1}}{\sigma_{X_1}}\hat{X}_1$, and the semantic distortion satisfies
		\begin{align*}
		\mathbb{E}d_S^\prime\left(X_1,\hat{S}\right)&=\mathrm{mmse}+\mathbb{E}\left[\left(\frac{\sigma_{SX_1}}{\sigma_{X_1}}X_1-\frac{\sigma_{SX_1}}{\sigma_{X_1}}\hat{X}_1\right)^2\right] \\
		&\le \mathrm{mmse}+\frac{\sigma_{SX_1}^2}{\sigma_{X_1}^2}D_1 \\
		&\le D_s.
		\end{align*}
		
		\item For $\frac{D_s-\mathrm{mmse}}{\sigma_{SX_1}^2}<\frac{D_1}{\sigma_{X_1}^2}$, 
		we first reconstruct $\hat{S}$ and $X_2$ subject to distortion constraints $D_s$ and $D_2$, 
		and hence achieve $R_{S|Y}(D_s)+R_{X_2|Y}(D_2)$. 
		Then we recover $\hat{X}_1=\frac{\sigma_{X_1}}{\sigma_{SX_1}}\hat{S}$, and the distortion satisfies
		\begin{align*}
		\mathbb{E}d_1\left(X_1,\hat{X}_1\right)&=\mathbb{E}\left[\left(X_1-\frac{\sigma_{X_1}}{\sigma_{SX_1}}\hat{S}\right)^2\right] \\
		&=\frac{\sigma_{X_1}^2}{\sigma_{SX_1}^2}\mathbb{E}\left[\left(\tilde{S}_\mathrm{MMSE}-\hat{S}\right)^2\right] \\
		&=\frac{\sigma_{X_1}^2}{\sigma_{SX_1}^2}\left(\mathbb{E}d_S^\prime\left(X_1,\hat{S}\right)-\mathrm{mmse}\right) \\
		&< D_1.
		\end{align*}
	\end{itemize}
	This establishes the achievability and thus completes the proof.

	\bibliographystyle{IEEEtran}
	\bibliography{ref_segmentation}
	
\end{document}